  \providecommand\BibTeX{{%
    \normalfont B\kern-0.5em{\scshape i\kern-0.25em b}\kern-0.8em\TeX}}}
\newcommand{\figureTopMargin}{\vspace{-3ex}}
\newcommand{\figureCaptionMargin}{\vspace{-3ex}}
\newcommand{\figureBelowMargin}{\vspace{-2ex}}
\newcommand{\secref}[1]{Sec.~\ref{#1}}
\newcommand{\figref}[1]{Fig.~\ref{#1}}
\newcommand{\fakeparagraph}[1]{\vspace{0.3mm}\noindent\textbf{#1.}}
\newcommand{\eg}{\emph{e.g.},\xspace}
\begin{document}
\fancyhead{}

\title{Privacy-Preserving Batch-based Task Assignment in Spatial Crowdsourcing with Untrusted Server}


\author{Maocheng Li}
\email{csmichael@cse.ust.hk}
\affiliation{%
    \institution{HKUST}
    \city{Hong Kong}
    \country{China}
}

\author{Jiachuan Wang}
\email{jwangey@cse.ust.hk}
\affiliation{%
    \institution{HKUST}
    \city{Hong Kong}
    \country{China}
}

\author{Libin Zheng}
\email{zhenglb6@mail.sysu.edu.cn}
\affiliation{%
    \institution{Sun Yat-sen University}
    \city{Guangzhou}
    \country{China}
}

\author{Han Wu}
\email{han.wu@stu.ecnu.edu.cn}
\affiliation{%
    \institution{East China Normal University}
    \city{Shanghai}
    \country{China}
}

\author{Peng Cheng}
\email{pcheng@sei.ecnu.edu.cn}
\affiliation{%
    \institution{East China Normal University}
    \city{Shanghai}
    \country{China}
}

\author{Lei Chen}
\orcid{0001-6984-7670}
\email{leichen@cse.ust.hk}
\affiliation{%
    \institution{HKUST}
    \streetaddress{Clear Water Bay}
    \city{Hong Kong}
    \country{China}
}

\author{Xuemin Lin}
\email{lxue@cse.unsw.edu.au}
\affiliation{%
    \institution{The University of New South Wales}
    \city{Sydney}
    \country{Australia}
}

\renewcommand{\shortauthors}{Maocheng Li et al.}

\begin{abstract}
In this paper, we study the privacy-preserving task assignment problem in spatial crowdsourcing, where the locations of both workers and tasks, prior to their release to the server, are perturbed with Geo-Indistinguishability (a differential privacy notion for location-based systems). Different from the previously studied online setting, where each task is assigned immediately upon arrival, we target the batch-based setting, where the server maximizes the number of successfully assigned tasks after a \textit{batch} of tasks arrive. To achieve this goal, we propose the $k$-Switch solution, which first divides the workers into small groups based on the perturbed distance between workers/tasks, and then  utilizes Homomorphic Encryption (HE) based secure computation to enhance the task assignment. Furthermore, we expedite HE-based computation by limiting the size of the small groups under $k$. Extensive experiments demonstrate that, in terms of the number of successfully assigned tasks, the $k$-Switch solution improves batch-based baselines by 5.9$\times$ and the existing online solution by 1.74$\times$, with no privacy leak.
\end{abstract}


\begin{CCSXML}
<ccs2012>
<concept>
<concept_id>10002951.10003227.10003236.10003101</concept_id>
<concept_desc>Information systems~Location based services</concept_desc>
<concept_significance>500</concept_significance>
</concept>
<concept>
<concept_id>10002951.10003260.10003282.10003296</concept_id>
<concept_desc>Information systems~Crowdsourcing</concept_desc>
<concept_significance>500</concept_significance>
</concept>
<concept>
<concept_id>10002978.10003022.10003028</concept_id>
<concept_desc>Security and privacy~Domain-specific security and privacy architectures</concept_desc>
<concept_significance>500</concept_significance>
</concept>
</ccs2012>
\end{CCSXML}

\ccsdesc[500]{Security and privacy~Domain-specific security and privacy architectures}
\ccsdesc[500]{Information systems~Location based services}
\ccsdesc[500]{Information systems~Crowdsourcing}

\maketitle

\keywords{differential privacy, crowdsourcing, task assignment}

\section{Introduction}
\label{sec:intro}
The mass adoption of GPS-equipped smart phones enables individuals to collaborate, participate, consume and produce valuable information about the environment and themselves. \textit{Spatial crowdsourcing} (SC) systems (e.g., Foursquare \cite{foursquare} , Gigwalk \cite{gigwalk}, MediaQ \cite{kim2014mediaq}, and gMission \cite{chen2014gmission}) have emerged to support such collaborations by assigning tasks to proper workers.  Task assignment is a  core issue in SC systems, asking the workers to 
move physically to specified locations to execute the tasks \cite{kazemi2012, DBLP:journals/vldb/yongxin19}.

To enable the SC server to properly assign tasks, in general, workers need to upload their locations. However, users' location is highly sensitive, because it can indicate users' whereabouts, and even disclose their private attributes. For example,  visiting an urgent care center reveals certain medical conditions \cite{DBLP:journals/fttcs/DworkR14}. The server, which receives the locations,  is untrusted and can be vulnerable to attacks. Thus, in previous studies \cite{to2018, DBLP:conf/icde/TaoTZSC020}, privacy-preserving task assignment is proposed to enable users (workers/task requesters) to perturb their locations with \textit{Geo-Indistinguishability} (Geo-I) \cite{andres13} and upload only the perturbed locations. Geo-I is a widely adopted differential privacy notion \cite{DBLP:conf/icalp/Dwork06} for location-based systems, which defends users' locations against strong adversaries with any prior knowledge \cite{DBLP:journals/fttcs/DworkR14}.

In this paper, we study the task assignment problem with users' perturbed locations under the batch-based setting. That is, the SC server assigns the tasks batch-by-batch to the workers, with locations of both workers and tasks  perturbed by Geo-I. Although Geo-I is an existing technique to protect location privacy of users, directly applying Geo-I to task assignment problem could lead to poor performance as measured by the number of successfully assigned tasks, because only the perturbed locations are available as the problem inputs. To the best of our knowledge,  existing works on spatial crowdsourcing with Geo-I  \cite{to2018,DBLP:conf/icde/TaoTZSC020} target the online setting where tasks come on the fly and are processed one by one. Batch-based task assignment with Geo-I-perturbed locations has not yet been studied. Directly applying the online assignment methods to the batch-based setting can result in poor solutions, because online solution assigns tasks one by one, and previous assignments can not be changed when new tasks arrive. An example of the online solution making sub-optimal assignments is shown in Figure~\ref{fig:online_offline}.

\begin{figure}[ht!]\centering \figureTopMargin\vspace{0ex}
	\subfigure[][{\scriptsize Online solution}]{
		\scalebox{0.4}[0.4]{\includegraphics{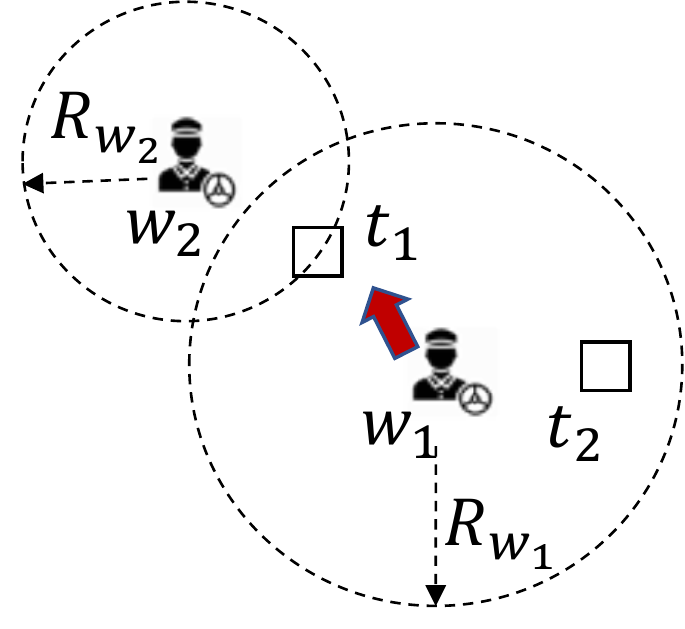}}
		\label{subfig:fig1_online_offline_left}}
	\subfigure[][{\scriptsize  Optimal solution}]{
		\scalebox{0.4}[0.4]{\includegraphics{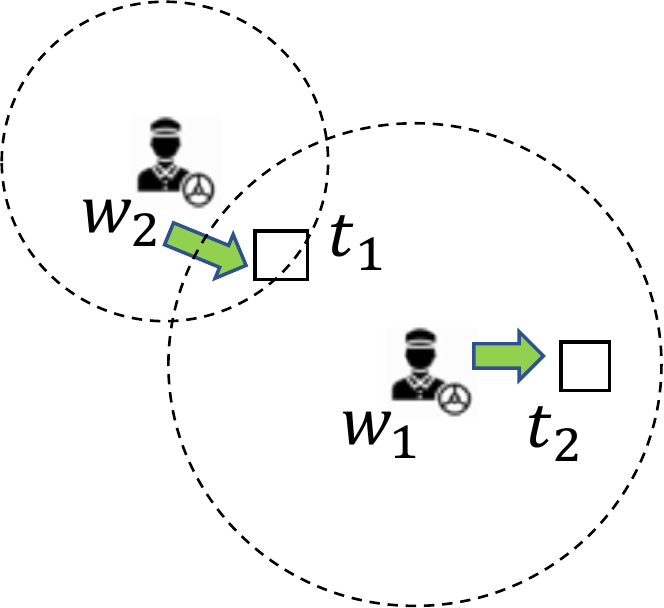}}
		\label{subfig:fig1_online_offline_right}}\figureCaptionMargin
    \caption{\small Online solution makes sub-optimal assignment vs. Optimal assignment. }
    \figureBelowMargin\vspace{0ex}
	\label{fig:online_offline}
\end{figure}

As Fig.~\ref{subfig:fig1_online_offline_left} shows, in the online setting, task $t_1$ arrives first and would be assigned to worker $w_1$ since they are close to each other. Note that such
a decision is irrevocable in the online setting. Subsequently, $w_1$ will no longer be available no matter what future tasks arrive. When task $t_2$ arrives later, because worker $w_1$ is occupied, $t_2$ cannot be assigned. $w_1$ is the only worker who can perform $t_2$, as $t_2$ is outside the reachability range $R_{w_2}$ of worker $w_2$. In contrast, when $t_1$ and $t_2$ come in a batch, they can be properly assigned to $w_2$ and $w_1$, respectively.


No existing research has specifically targeted the batch-based privacy-preserving task assignment, however, as a matter of fact, batch-based setting has been widely adopted in the spatial crowdsourcing industry. For example, DiDi \cite{didi}, the leading car-hailing platform in China, accumulates orders (tasks) per time window, and jointly dispatches them to drivers (workers) \cite{DBLP:journals/pvldb/Zheng0Y18, didi2018}. 
Figure~\ref{fig:offline} gives an example of our studied problem in this paper. Fig.~\ref{subfig:fig2_offline_left} shows the true locations of workers $w_1, w_2$ and tasks $t_1, t_2$.   $w_1$ and $w_2$  have a reachability range $R_{w_1}$ and $R_{w_2}$ respectively, denoting the maximum distance that they are willing to travel. In Fig.~\ref{subfig:fig2_offline_right}, the locations are perturbed using Geo-I. Our problem is to assign tasks to suitable workers to maximize the number of assigned tasks subjecting to  workers' reachability constraint.


\begin{figure}[ht!]\centering \figureTopMargin\vspace{0ex}
	\subfigure[][{\scriptsize True locations}]{
		\scalebox{0.4}[0.4]{\includegraphics{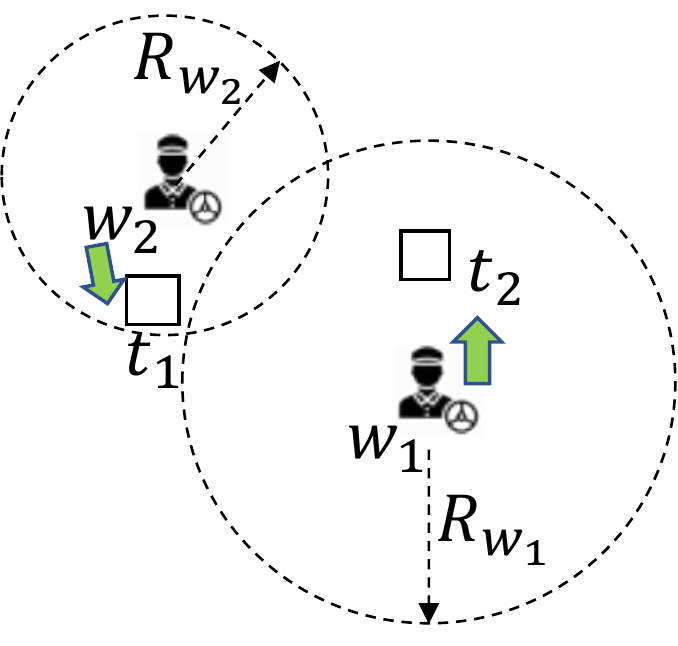}}
		\label{subfig:fig2_offline_left}}
	\subfigure[][{\scriptsize  Perturbed locations}]{
		\scalebox{0.4}[0.4]{\includegraphics{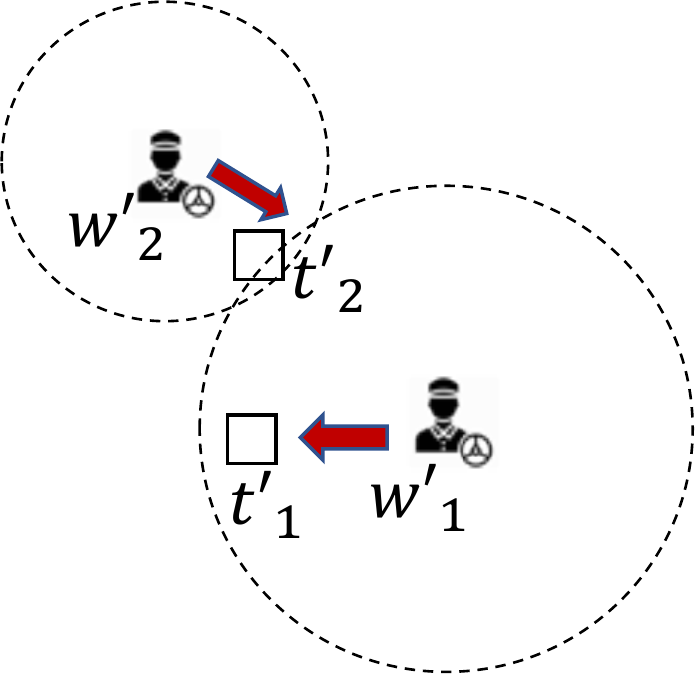}}
		\label{subfig:fig2_offline_right}}\figureCaptionMargin
    \caption{\small Our Privacy-preserving Batch-based Task Assignment (PBTA) problem. }
    \figureBelowMargin\vspace{0ex}
	\label{fig:offline}
\end{figure}

The obfuscated locations bring technical challenges to the task assignment problem. As shown in Fig.~\ref{subfig:fig2_offline_right}, if we directly use the perturbed locations as the true locations to perform the task assignment, then task $t_2$ and $t_1$ are assigned to worker $w_2$ and $w_1$ respectively (denoted by the red arrows). However, according to the true locations, $t_2$ lies outside the range of $w_2$, and $t_1$ lies outside the range of $w_1$. The aforementioned assignment finally has no task successfully executed.
In fact, the optimal assignment should be assigning $t_2$ to $w_1$ and $t_1$ to $w_2$ (denoted by the green arrows in Fig.~\ref{subfig:fig2_offline_left}), which would finally execute both tasks.

To address the aforementioned challenge, we propose an innovative solution $k$-Switch. It first employs probabilistic inference over the perturbed locations to obtain an initial assignment, then uses an encryption-based secure computation method to improve the assignment quality significantly without compromising privacy. To address the efficiency issue of existing Homomorphic Encryption techniques, it divides workers into small groups of size $k$, and apply HE within the small groups only. Using the example in Fig.~\ref{fig:offline}, $k$-Switch uses perturbed locations to obtain a preliminary assignment first, and then allows worker $w_1$ and $w_2$ to use secure communication to obtain the \textit{true distance} from the tasks and  switch tasks in order to obtain a better assignment. 

To summarize, we make the following contributions in this paper.

\begin{itemize}[leftmargin=*]
\setlength\itemsep{0.5em}
\item We propose $k$-Switch, an innovative solution to address the challenging Privacy-preserving Batch-based Task Assignment (PBTA) problem. To the best of our knowledge, this is the first work targeting the \textit{batch-based setting} of the privacy-preserving task assignment problem. The PBTA problem is defined in  Section~\ref{sec:problem} together with two proposed oblivious baselines.  

\item  On top of the oblivious baselines, we propose the  $k$-Switch method, which combines probabilistic analysis and encryption-based secure computation techniques to achieve significant assignment quality improvements without compromising privacy. It mitigates the efficiency issue of previous encryption-based secure computation techniques by doing the computation within the small groups. We introduce $k$-Switch in Section~\ref{sec:kSwitch}.

\item We conduct extensive experiments to validate the efficiency and effectiveness of the proposed algorithms. In terms of the number of successfully assigned tasks, the \textit{k}-switch method assigns up to 5.9$\times$ more tasks than other proposed baselines, and assigns up to 1.74$\times$ more tasks than the existing online solution. The results are shown in Section \ref{sec:experiment}.
\end{itemize}

In addition, we cover the preliminaries and necessary background in Sec.~\ref{sec:background}, compare with other related works in Sec.~\ref{sec:relatedWork} and conclude the paper in Sec.~\ref{sec:conclusion}. The notations used in this paper are summarized in Table \ref{tab:table1}.
\section{Background} 
\label{sec:background}

\begin{table}
	\centering \vspace{-3ex}
	{\small\scriptsize
		\caption{\small Notation.} \label{tab:table1}
		\vspace{-3ex}
		\begin{tabular}{l|m{6cm}}
			\hline
			{\bf Symbol} & {\bf \qquad \qquad \qquad\qquad\qquad Description} \\ \hline 
			$w, t, W, T$   & A worker, a task, a set of workers, a set of tasks\\
			\hline
			$w', t'$   & A worker and a task after perturbation \\
			\hline
			$l_{w}, l_{t}$  & True locations for a worker $w$ and a task $t$\\
			\hline
			$l_{w'}, l_{t'}$   & Perturbed locations for a worker $w$ and a task $t$\\
			\hline
			$R_{w}$  & Reachable distance for a worker $w$\\
			\hline
			$d(\cdot)$  & Euclidean distance function\\
			\hline
			$M$  & An assignment (matching) between workers and tasks \\
			\hline
			$M_0$  & A baseline matching obtained by oblivious baseline methods \\
			\hline
			$p, P$ & $p$ is a matched worker/task pair $p=(w, t)$. $P = W \times T$ is the set of all possible worker-task pairs. \\
			\hline
			$l=\epsilon r$  & Privacy parameters \\
			\hline
		\end{tabular}
	}\vspace{-1ex}
\end{table}

\subsection{Task assignments in spatial crowdsourcing}
We introduce the task assignment problem, especially the batch-based setting, which is the primary focus of this paper. 

\begin{definition} (Task assignment problem in spatial crowdsourcing \cite{DBLP:journals/vldb/yongxin19, DBLP:conf/icde/ZengTCZ18}) Given a set of workers $W$ and a set of tasks $T$, the task assignment problem returns an assignment (matching) $M$ of tasks to workers $M=\{(w, t)|w \in W, t \in T\}$ such that for some given objective function $\Psi(\cdot)$, $\Psi(M)$ is optimized (maximized or minimized):
	$$\Psi(M) = \sum_{(w, t)\in M}\psi(w,t),$$

\end{definition}

Following the setting in \cite{to2018}, we have the following spatial constraint and objective function. Each worker is willing to travel at most $R_w \in \mathbb{R}$, i.e., for each pair $(w, t) \in M$, $d(l_w, l_t) \leq R_w$. Each successfully assigned task carries a unit utility. The task assignment problem maximizes $\Psi(M) = |M|$, the size of the matching. 

In this paper, we focus on the batch-based setting (also referred as offline or static setting): the locations of \textit{all} workers and tasks are known at the beginning. Previous works \cite{to2018,DBLP:conf/icde/TaoTZSC020} mainly focus on the privacy preservation for the online setting: tasks arrive one by one, and each task needs to be assigned immediately upon arrival and cannot be re-assigned to other workers no matter what future tasks arrive.

\subsection{Geo-indistinguishability}
\label{subsec:geo-i}

Geo-indistinguishability (Geo-I) \cite{andres13} extends the traditional and well-adopted privacy notion -- differential privacy \cite{DBLP:conf/icalp/Dwork06} to location-based systems. 

\begin{definition} (Geo-I) For all true locations $x,x'$, a privacy parameter $\epsilon$, a mechanism $M$ satisfies  $\epsilon$-Geo-I iff:
	$$d_{\rho}(M(x),M(x'))\leq \epsilon d(x,x'),$$
\end{definition}
where $d(x,x')$ is the Euclidean distance between $x$ and $x'$ while $d_{\rho}(M(x),M(x'))$ is the multiplicative distance between two distributions $M(x)$ and $M(x')$. $M(x)$ and $M(x')$ are the distributions of perturbed locations based on the original location $x$ and $x'$ respectively.

One particular mechanism satisfying Geo-I is drawing random noise from the planar Laplace distribution \cite{andres13}. Given the privacy parameter $\epsilon \in \mathbb{R}^{+}$, the actual location $x_0 \in \mathbb{R}^{2}$, the probability density function of a noisy location $x \in \mathbb{R}^{2}$ is:

\begin{equation}
D_\epsilon(x_0)(x)=\frac{\epsilon ^2}{2\pi}e^{-\epsilon d(x_0,x)} \\, \label{eq:geo_i_equation}
\end{equation}

where $\frac{\epsilon ^2}{2\pi}$ is the normalization factor.

\section{Problem Definition}
\label{sec:problem}
We formally define the Privacy-preserving Batch-based Task Assignment (PBTA) problem in this section and introduce our proposed oblivious baselines. 

\subsection{PBTA problem}

\begin{definition} (Privacy-preserving Batch-based Task Assignment (PBTA) problem) Given a set of workers $W$ and a set of tasks $T$, the perturbed location $l_{t'}$ for each task $t$, the perturbed location $l_{w'}$ for each task $w$, the reachable distance $R_w$ for each worker $w$, the Euclidean distance function $d(\cdot)$, the PBTA problem is to return an assignment (matching) $M$ of tasks to workers $M=\{(w, t)|w \in W, t \in T, d(l_w, l_t) \leq R_w\}$ such that the following objective function is maximized:
	$$\Psi(M) = |M|.$$
\vspace{-5ex}
\label{def:pbta}
\end{definition}

Note that in the definition above, only the perturbed locations $l_{w'}, l_{t'}$ are available in the input. On the other hand, for the objective we are maximizing, each pair in the matching $M$ needs to satisfy the spatial constraint w.r.t. the true locations $l_w, l_t$. We have shown a simple example in Sec.~\ref{sec:intro} (Figure~\ref{fig:offline}) to illustrate the problem. 

In our setting, The perturbed locations of both workers $l_{w'}$ and tasks $l_{t'}$ are obtained by applying Geo-I with the privacy level $l = \epsilon r$. The privacy level is the same for all workers and tasks. Also, one worker $w$ takes at most one task, and each task $t$ only needs to be matched to one worker. The setting follows \cite{to2018}.

\subsection{Privacy model}

\subsubsection{System model}

We follow the system model assumptions of \cite{to2018}. We have three parties in our system: the server, the workers, and the task requesters (short as \textit{tasks} hereafter). 

Because the server is untrusted, when workers and tasks submit their locations, they only send the perturbed locations. The server is untrusted, as centralized servers are usually vulnerable to attacks, suffering massive data leak. For example, Facebook security breach exposes 50 million users' data \cite{facebook2018}. 

The task assignment is done at the server. After workers and tasks submit perturbed locations to the server, the server runs its task assignment algorithm, obtains an assignment $M$, and notifies workers and tasks about their assignments. 

\subsubsection{Adversary model}

Similar to the setting in \cite{to2018}, we adopt a semi-honest model, which assumes that all participating parties (the workers, the task requesters, and the server) are curious but not malicious.

They are curious about the private and sensitive information about other parties. So, we try to prevent sensitive information of any party from being shared with other parties. On the other hand, they are not malicious and follow system protocols. They do not collude with each other to gain extra information. 
\subsection{Baselines}

We propose a simple baseline solution Oblivious-M, which directly uses the perturbed (observed) locations to obtain a matching between workers and tasks. Then, we incorporate probabilistic analysis of reachability between workers and tasks, and propose another solution Oblivious-RR, which is based on randomized rounding. 

The term `Oblivious' indicates that both of these methods only access the perturbed locations, and true locations have never been accessed and disclosed in any way when running these methods. 

\subsubsection{Oblivious-M}
\label{subsec:om}

The Oblivious-M method is the simplest baseline solution we propose. It builds the reachability graph from the perturbed locations, and runs the Max-Flow algorithm (e.g., Ford-Fulkerson algorithm \cite{cormen01introduction}) to obtain a maximized cardinality matching. 

We use a running example in Figure~\ref{fig:example} (modified from Fig. 3 of \protect\cite{to2018}) to demonstrate the basic steps of Oblivious-M. In the PBTA problem, we are given the observed (perturbed) locations of workers and tasks (Fig.~\ref{subfig:example_left}). In Oblivious-M method, directly using the observed locations, if a task $t$ is within the range $R_w$ of worker $w$, we create an edge between $w$ and $t$ indicating they are reachable. In this way, we build the reachability graph as shown in Fig.~\ref{subfig:example_right}. 

\begin{figure}[t!]\centering \figureTopMargin\vspace{0ex}
	\subfigure[][{\scriptsize Inputs}]{
		\scalebox{0.35}[0.35]{\includegraphics{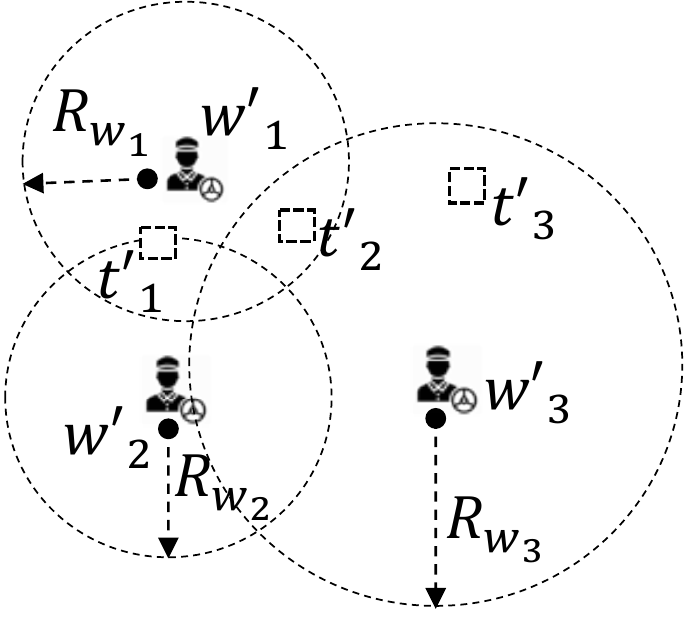}}
		\label{subfig:example_left}}
	\subfigure[][{\scriptsize  Reachability graph}]{
		\scalebox{0.4}[0.4]{\includegraphics{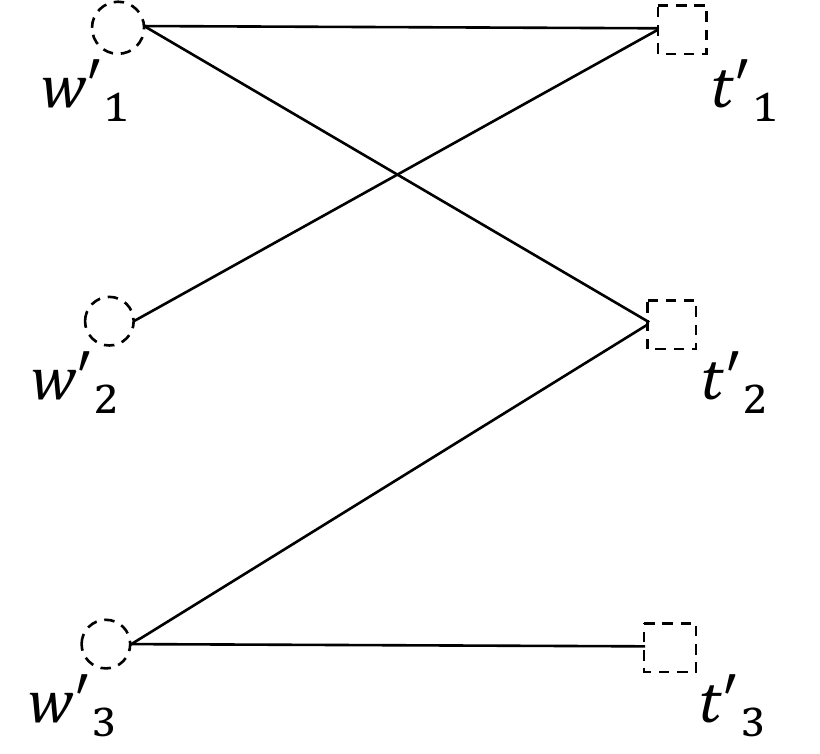}}
		\label{subfig:example_right}}\figureCaptionMargin
    \caption{\small A running example of PBTA problem}
    \figureBelowMargin\vspace{0ex}
	\label{fig:example}
\end{figure}

The reachability graph is a bipartite graph with all workers on one side and tasks on the other side, and all edges (indicating reachability) in between. Next, we simply add superficial source/sink nodes, edges with capacity 1 between the source node and every worker, and edges with capacity 1 between the sink node and every task to build the flow network (see Fig.~\ref{fig:example_3}). It is shown that the maximum cardinality matching from the reachability graph corresponds to the Max-Flow on the constructed flow network. So, we run standard Max-Flow algorithms (such as Ford-Fulkerson \cite{cormen01introduction}) on the constructed flow network, and return all saturated edges $(w, t)$ (with flow value 1 on the edges) as the output $M$. In our running example, the maximum flow obtained is shown in Fig.~\ref{fig:example_3}. The returned matching is $M=\{(w_1, t_2), (w_2, t_1), (w_3, t_3)\}$. 

\begin{figure}[ht!]\centering \figureTopMargin\vspace{0ex}
	\scalebox{0.35}[0.35]{\includegraphics{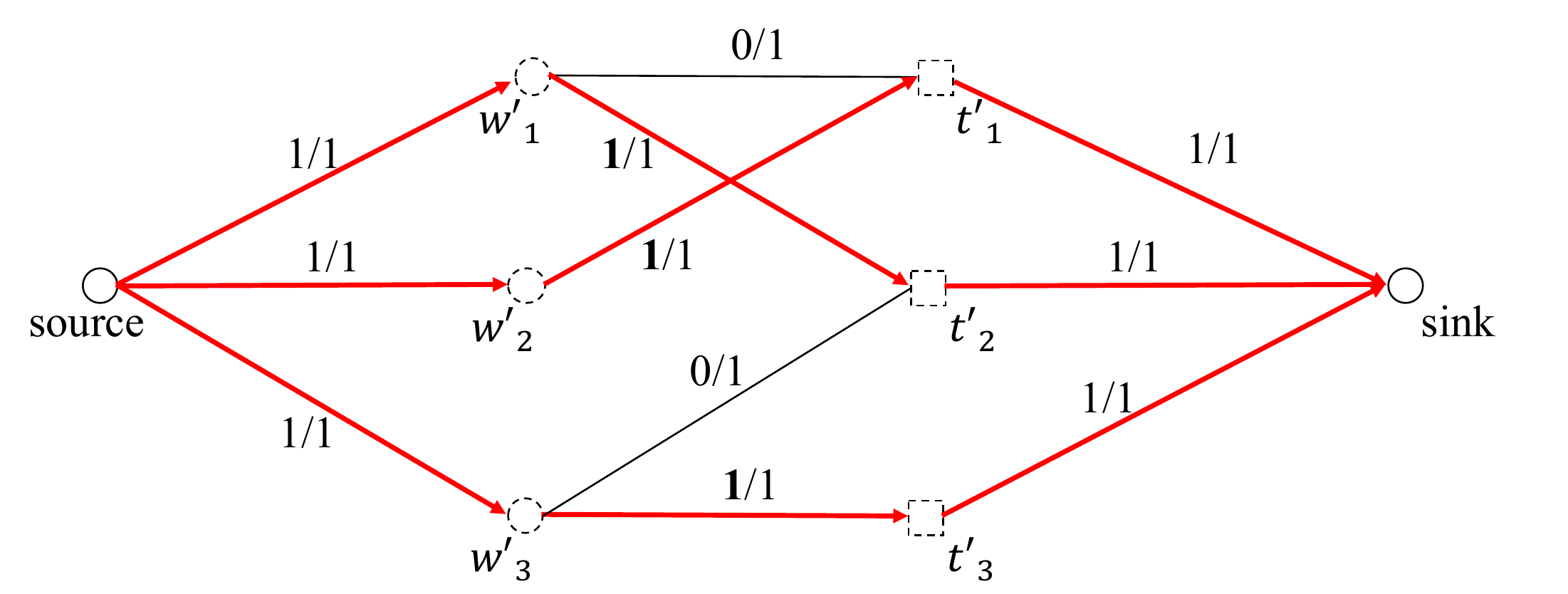}}
	\figureCaptionMargin\vspace{0ex}
	\caption{\small The maximum flow obtained on our running example.}
	\figureBelowMargin\vspace{0ex}
	\label{fig:example_3}
\end{figure}

When we measure the size of the matching $M$, we need to use the true locations to check whether a task is indeed within the reach of the assigned worker. The true locations and the reachability graphs built from the true locations are shown in Figure~\ref{fig:example_4}. In our returned matching $M$ based only on the perturbed locations, $t_2$ and $t_1$ are indeed within the range of $w_1$ and $w_2$ respectively. However, $t_3$ lies outside the range of $w_3$. Thus, the size of the matching we find is 2 (instead of 3). 

\begin{figure}[ht!]\centering \figureTopMargin\vspace{0ex}
	\scalebox{0.35}[0.35]{\includegraphics{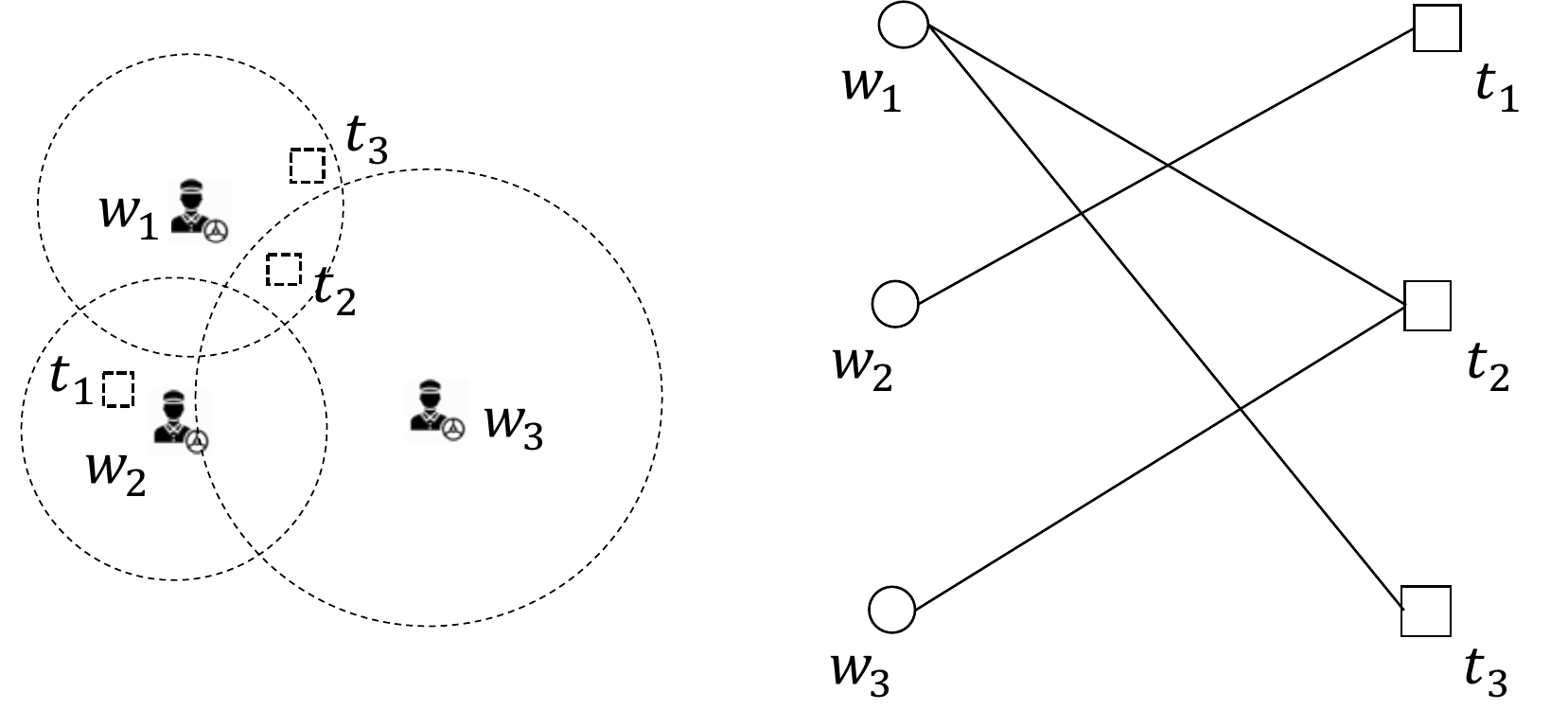}}
	\figureCaptionMargin
	\caption{\small The true locations of our running example.}
	\figureBelowMargin\vspace{0ex}
	\label{fig:example_4}
\end{figure}

For the time complexity, the fattest-path first implementation of Ford-Fulkerson is $\mathcal{O}(|E| \cdot opt)$, where $E$ denotes edges for the flow network, and $opt$ is the optimal value of the flow. We let $n = \max(|W|, |T|)$, the larger value of the sizes of the workers and the tasks. In our constructed flow network, assuming all tasks are connected with all workers in the worst case, $|E|=\mathcal{O}(n^2)$. As for $opt$, the optimal flow value is bounded by $n$. Overall, Oblivious-M has complexity of $\mathcal{O}(|E| \cdot opt) \to \mathcal{O}(n^2 \cdot n) \to \mathcal{O}(n^3)$.  

\subsubsection{Oblivious-RR}
\label{subsec:orr}

When we construct the reachability graph in the simplest baseline solution Oblivious-M, an edge between a worker $w$ and a task $t$ is either 1 (if perturbed locations show that $t$ is reachable from $w$) or 0 (when $t$ is not reachable from $w$). We make improvements in this step by adopting probabilistic analysis. Intuitively, based on perturbed locations, some worker-task pairs should have higher likelihood that they are indeed reachable, and on the other hand, some other worker-task pairs should have lower likelihood of being reachable.

We continue with the running example in Figure~\ref{fig:example} to further explain the intuition of the probabilistic analysis. If we focus on $w'_1$, and compare two tasks $t'_1$ and $t'_2$, then based on the perturbed locations, $t'_1$ looks much closer to $w'_1$ compared to $t'_2$ (the distance between $w'_1$ to $t'_1$ is almost 1/2 of $t'_2$, as $t'_2$ is almost at the periphery of the reachable circle). Oblivious-RR incorporates the probabilistic analysis and gives each edge in the reachability graph a fractional weight ranging from 0 to 1, instead of a binary 0/1 value. Based the fractional flow network, we design randomized techniques to obtain a matching. Due to space limit, we leave the details of Oblivious-RR in the appendix (\secref{subsec:appendix_orr}). Here, we show a particular matching obtained after the randomized rounding in Fig.~\ref{subfig:orr_2_right}, which assigns the same amount of tasks as the optimal matching. 

The time complexity of Oblivious-RR is the same with Oblivious-M, as it only adds a post randomized rounding which takes $\mathcal{O}(n^2)$ time, with $n = \max(|W|, |T|)$. Overall, Oblivious-RR runs in $\mathcal{O}(n^3)$ time. 

\begin{figure}[t!]\centering \figureTopMargin\vspace{0ex}
	\subfigure[][{\scriptsize The maximum weight flow}]{
		\scalebox{0.4}[0.4]{\includegraphics{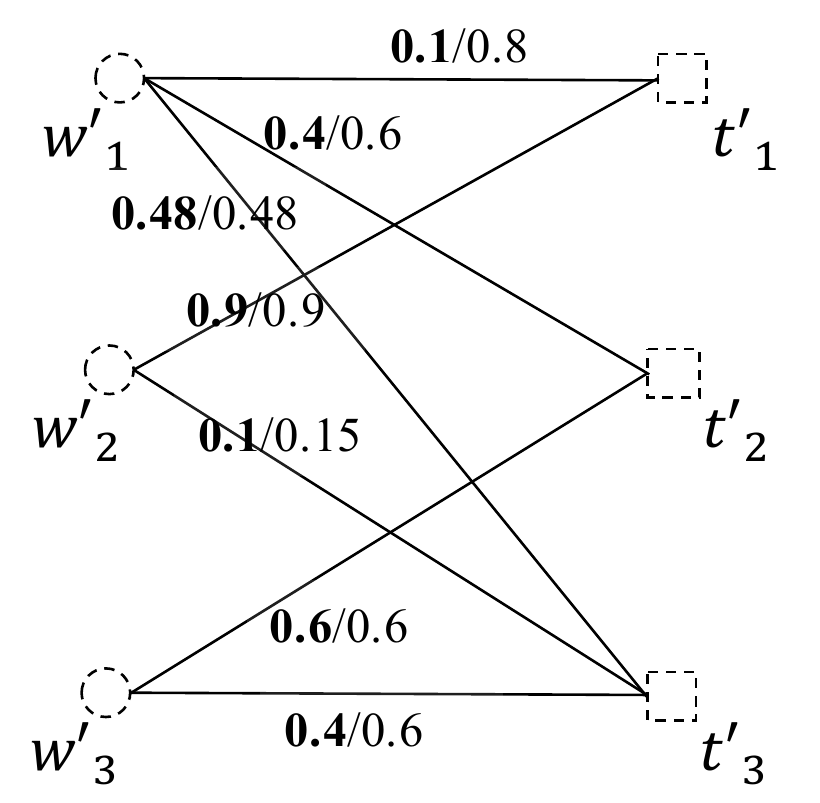}}
		\label{subfig:orr_2_left}}
	\subfigure[][{\scriptsize  The matching }]{
		\scalebox{0.4}[0.4]{\includegraphics{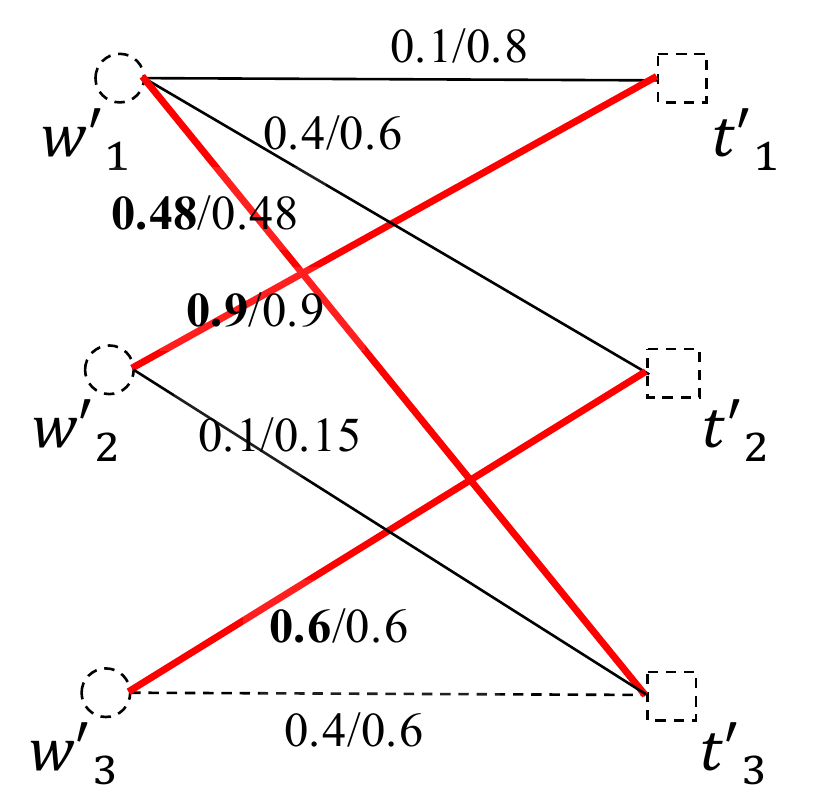}}
		\label{subfig:orr_2_right}}\figureCaptionMargin
    \caption{\small  The maximum weight flow and the rounded matching. }
    \figureBelowMargin\vspace{0ex}
	\label{fig:rr_2}
\end{figure}

\section{k-Switch}
\label{sec:kSwitch}

We present the main contribution of this paper -- our proposed $k$-Switch in this section. We introduce the basic idea of the entire solution first, and then present the technical details in each step of the method. 

\subsection{Overview}

$k$-Switch method is a novel task swapping method, which uses coordination between workers to achieve utility gain. It trades off modest system overhead with secure computation between workers to increase the number of successfully assigned tasks, without privacy leak. 

\begin{figure}[t!]\centering \figureTopMargin\vspace{0ex}
	\scalebox{0.3}[0.3]{\includegraphics{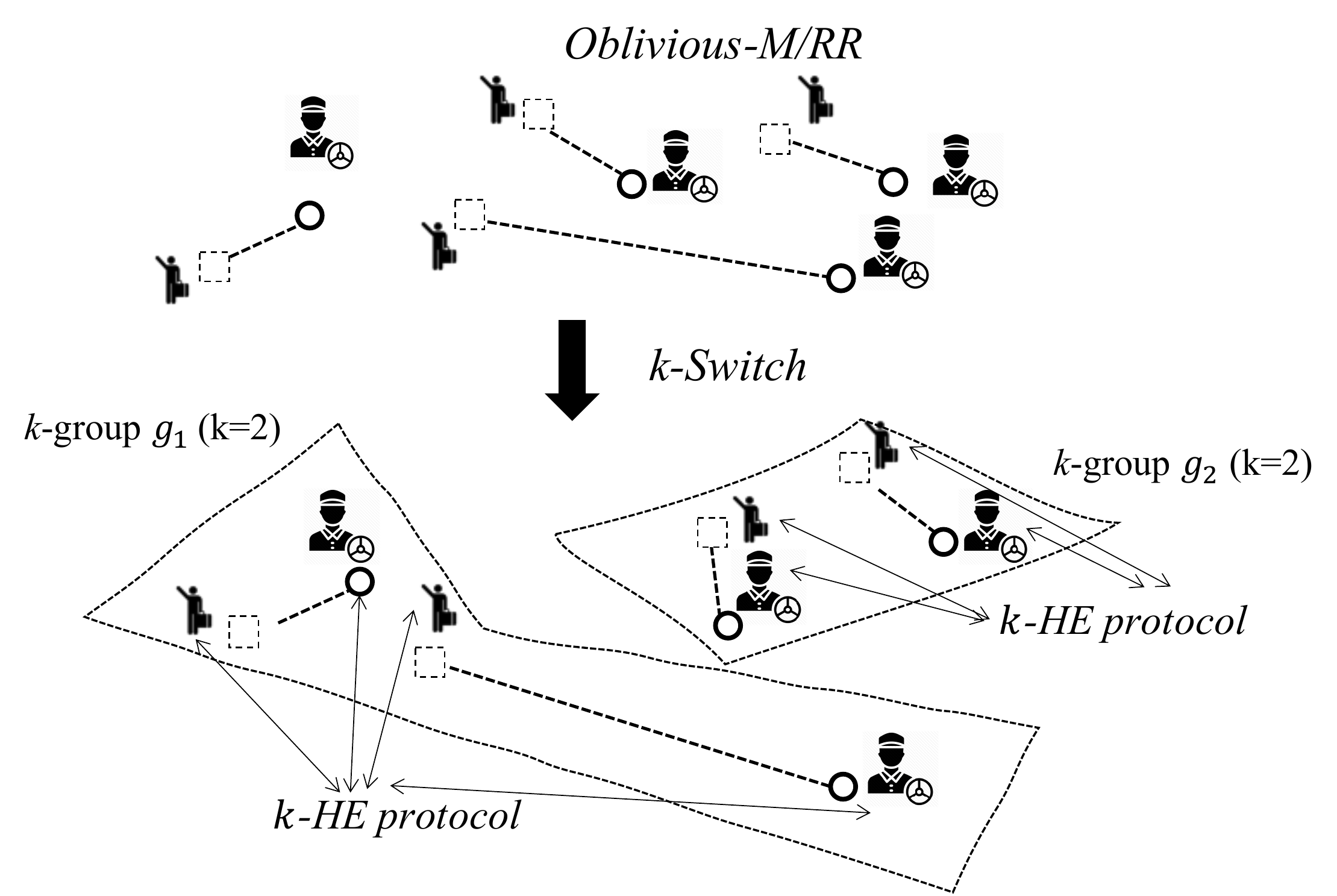}}\vspace{-1ex}
	\caption{\small $k$-Switch}
	\figureBelowMargin\vspace{0ex}

	\label{fig:kSwitch}
\end{figure}

Figure~\ref{fig:kSwitch} shows the basic idea of the method. $k$-Switch starts with a baseline matching obtained by our proposed Oblivious baselines, and then makes an improvement on the matching with the following steps: 

Step 1) Grouping (Sec.~\ref{subsec:grouping}): carefully groups workers into groups of size $k$, where $k$ is a small number selected by the server. 

Step 2) $k$-HE (Sec.~\ref{subsec:kHE}): for each $k$-group, the workers and the task inside the group uses secure computation to calculate the true distance based on encrypted true locations. Then, workers inside the $k$-group swap tasks if such swapping achieves utility gain, i.e., increase the number of successfully assigned tasks. 

Step 3) $\lambda$-Opting (Sec.~\ref{subsec:lambdaOpting}): server iterates Step 1) and Step 2), each time grouping workers into small $k$-groups and letting them use $k$-HE to communicate inside the group to achieve utility gain. 

We introduce the details of each step in the following sections. We show Step 1) Grouping requires solving an NP-hard problem when $k \geq 3$, and our solution includes an efficient greedy grouping algorithm. 

\subsection{Baseline matching}

$k$-Switch starts with a baseline matching $M_0$ obtained by either Oblivious-M or Oblivious-RR proposed in Section~\ref{sec:problem}. According to the experiment results, we find that Oblivious-M outperforms Oblivious-RR in terms of the number of successfully assigned tasks, so we use it in $k$-Switch. In the future, if other better Oblivious matching methods are proposed, they could also be incorporated to find the baseline matching $M_0$.

For clarify of presentation, we formally define the baseline matching $M_0$. 

\begin{definition} (Baseline matching) For a PBTA problem as defined in Def.~\ref{def:pbta}, we use Oblivious-M to find a baseline matching $M_0$: 
	$$M_0 = \{(w, t) | w \in W, t \in T\}$$
	\vspace{-5ex}

	\label{def:m0}
\end{definition}

Again, to illustrate our idea clearly, we continue with the running example in Figure~\ref{fig:example_3}. Using Oblivious-M, we obtain a baseline matching $M_0 = \{(w_1, t_2), (w_2, t_1), (w_3, t_3)\}$. At this stage, only perturbed locations shown in Figure~\ref{fig:example} is observable to the server. 

As we illustrate previously, this baseline matching could be  sub-optimal when we use true locations to verify the reachability constraints of the workers. In the next step Grouping, we divide workers into small groups of size $k$ for further optimization. 

\subsection{Grouping}
\label{subsec:grouping}

\subsubsection{Definitions and k-Grouping problem}

The purpose of grouping is to group nearby workers and tasks into small groups, such that there are good chances that swapping tasks inside the small group could lead to improvement of the number of successfully assigned tasks (the utility of the matching). The small groups are referred as $k$-groups. Next, we formally define the $k$-group and the $k$-Grouping problem. 

For clarify of presentation, we use $P = W \times T$ to denote the set of all possible matched worker/task pairs. For example, in the baseline matching $M_0$ in our running example, we have three matched pairs, $p_1 = (w_1, t_2) \in P$. $p_2 = (w_2, t_1) \in P$, $p_3 = (w_3, t_3) \in P$. The baseline matching could also be denoted as $M_0 = \{p=(w, t) |  w \in W, t \in T\}$.

\begin{definition} ($k$-group) A $k$-group is a set of $k$ matched worker-task pairs in the baseline matching. A $k$-group $g$ contains $k$ different workers and their matched tasks: 
	\vspace{-1ex}
	$$g = \{p =(w,t) | p \in M_0\} \quad s.t. \quad  |g| = k.$$
	\label{def:kGroup}
\vspace{-5ex}

\end{definition}

Back to the running example, if $k=2$, then a $k$-group becomes a $2$-group which contains 2 matched pairs. For example, $g_1=(p_1, p_2)$ is a possible $2$-group, $g_2 = (p_1, p_3)$ is another possible $2$-group. For 2-group $g_1$, it contains worker from the matched pair $p_1$, which is $w_1$. It also contains the worker from the matched pair $p_2$, which is $w_2$. The 2-group $g_1$ also contains the tasks that are matched with the workers, which are task $t_2$ in pair $p_1$ and $t_1$ in pair $p_2$.

The purpose of grouping is to group nearby workers and tasks together such that there are good opportunities that swapping tasks between them leads to better assignment. Some $k$-groups are better than the others. To evaluate the quality of each $k$-group, we propose a simple yet effective  measurements, verified by experiments, called obfuscation-score (short as OScore). We first define the OScore on a $2$-group and then extend it to $k$-group. 

\begin{definition} (Obfuscation-score) Given a $2$-group $g = \{p_1, p_2\}$, we define the obfuscation-score (or OScore) of the $2$-group as: 
	$$\textit{OScore}(g) = d(l'_{p_1.w}, l'_{p_2.w}) + d(l'_{p_1.t}, l'_{p_2.t}).$$
	\label{def:oscore}
	Here $p_1$ and $p_2$ are the two matched pairs inside the $2$-group $g$. Notation $p.w$ and $p.t$ respectively denote the worker and the task from the matched pair. 
\end{definition}

The OScore definition is straightforward, it measures two distance: i) the distance between the perturbed locations of the two workers in the $2$-group; and ii) the distance between the perturbed locations of the two tasks that are assigned to the two workers in the baseline matching $M_0$. Then OScore is the sum of the two distance. 

Intuitively, if both workers and their assigned tasks appear to be close to each other, they are spatially \textit{clustered}, and should be grouped together in a $k$-group. Other measures more sophisticated than OScore could be defined and used, however we adopt this OScore as it is very efficient to compute and yet effective, as to be demonstrated in our experiments. Next, we extend the OScore measurement from $2$-group to $k$-group. 

\begin{definition} (OScore of a $k$-group) For a $k$-group $g = \{p_1, \ldots, p_k\}$, we define its OScore as:
	$$\text{OScore}(g) = \sum_{1 \leq i < j \leq k} \text{OScore}(\{p_i, p_j\}).$$ 
	\label{def:koscore}
	\figureBelowMargin\vspace{-2ex}

\end{definition} 

For a $k$-group $g$, its OScore is the sum of the OScore of $g$'s subsets of size 2. We abuse notations to use the same OScore to refer to different definitions for $2$-group and $k$-group when $k\geq 3$. 

Before we introduce the $k$-grouping problem, we define the term $k$-division as a collection of $k$-groups that we select out of the baseline matching $M_0$. 

\begin{definition} ($k$-division) Given a baseline matching $M_0 = \{p | p \in P\}$, a $k$-division $D = \{g_1, \ldots, g_d\}$ is a set of non-overlapping $k$-groups of $M_0$, where the size of the $k$-division is $d = \lceil |M_0|/k\rceil$. We define the score of a $k$-division as:
	$$\text{score}(D) = \sum_{1 \leq i \leq d} \text{OScore}(g_i).$$
	\label{def:k-division}
	Remark: i) for $1 \leq i \leq d$, $|g_i|=k$ except for at most one group, when $|M_0|$ is not a multiple of $k$; ii) $k$-groups are non-overlapping (disjoint) when workers from any $k$-group is different from any other $k$-groups. 
\end{definition}

The $k$-division is a collection of $k$-groups from the baseline matchings. It divides workers into small $k$-groups, each with exact size $k$, except for at most one group. The exception happens when the total number of workers is not a multiple of $k$, and the exceptional group has a size in the range of $[0, k-1]$. The score of the $k$-division is defined as the sum of OScore of all its $k$-groups.  

Next we are ready to formally define the $k$-Grouping problem. It divides the workers in the baseline matching $M_0$ into small groups to create a $k$-division as defined in Def.~\ref{def:k-division}, and minimizes the score of the $k$-division. 

\begin{definition} ($k$-Grouping problem (KGP)) Given a matching $M_0 = \{p | p \in P\}$, the $k$-grouping problem returns a $k$-division $D^*$ with minimal score. Formally, for any other $k$-division $D$: 
	$$\text{score}(D^*) \leq \text{score}(D).$$  
	\figureBelowMargin\vspace{-2ex}

	\label{def:KGP}
\end{definition}

Next, we provide theoretical analysis on the hardness of $k$-Grouping problem (KGP). We show for $k=2$, KGP is polynomial-time solvable, while for $k\geq 3$, it is hard to approximate. 

\subsubsection{Algorithms for $k=2$}

\begin{theorem} 
	When $k=2$, KGP is in P class, solvable in polynomial time.
	\label{theo:kpgk2}
\end{theorem}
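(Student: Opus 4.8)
The plan is to reduce KGP with $k=2$ to the \emph{minimum-weight perfect matching} problem on a complete graph, which is solvable in polynomial time, and to show that the reduction preserves cost. Write $m = |M_0|$ and enumerate the matched pairs of $M_0$ as $p_1, \dots, p_m$. First I would build an auxiliary weighted graph $G' = (V', E')$ with vertex set $V' = \{p_1, \dots, p_m\}$ and an edge $\{p_i, p_j\}$ for every $i \neq j$ of weight $\text{OScore}(\{p_i, p_j\}) = d(l'_{p_i.w}, l'_{p_j.w}) + d(l'_{p_i.t}, l'_{p_j.t})$. Since the server holds all perturbed locations, the whole graph $G'$ (with $\binom{m}{2}$ edge weights) is computable in $O(m^2)$ time, using only perturbed locations.

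Next I would set up the correspondence between $2$-divisions of $M_0$ and perfect matchings of $G'$. By \defref{def:kGroup}, a $2$-group is an unordered pair of matched pairs, hence exactly an edge of $G'$. When $m$ is even, a $2$-division $D = \{g_1, \dots, g_{m/2}\}$ is therefore precisely a partition of $V'$ into edges, i.e. a perfect matching of $G'$, and by \defref{def:k-division} its score $\sum_i \text{OScore}(g_i)$ equals the total weight of that matching; so $D^*$ has minimum score iff the associated perfect matching has minimum weight. When $m$ is odd, every $2$-division has exactly one group of size $1$, which contributes $\text{OScore} = 0$ (an empty sum over $2$-subsets); I would model this by adjoining a single dummy vertex $p_0$ to $V'$ and joining it to each $p_i$ with a zero-weight edge. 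A minimum-weight perfect matching of this $(m{+}1)$-vertex graph then matches $p_0$ to some $p_j$ — which we read off as the singleton group — and its remaining edges as the size-$2$ groups, with total weight equal to the division's score; conversely any $2$-division lifts to such a matching by pairing its singleton pair with $p_0$. Thus the reduction is a cost-preserving bijection in both parities.

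Finally, minimum-weight perfect matching on an $N$-vertex graph is solvable in polynomial time (e.g. $O(N^3)$ by Edmonds' blossom algorithm), and here $N \le m + 1 = O(|M_0|) = O(n)$ with $n = \max(|W|,|T|)$. Running this algorithm on $G'$ and translating the optimal matching back through the bijection above yields a $2$-division of minimum score, i.e. an optimal $D^*$, in polynomial time; hence KGP is in the P class for $k = 2$.

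I do not expect a genuine obstacle — the content is essentially the reduction itself. The only steps that need care are (i) handling the odd-$m$ case so that the ``at most one group of size $< k$'' clause of \defref{def:k-division} is reflected faithfully by the zero-weight dummy vertex, and (ii) verifying that the map between $2$-divisions and perfect matchings is a true cost-preserving bijection in both directions (not merely a one-way correspondence), so that optimality transfers back.
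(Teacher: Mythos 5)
Your reduction is essentially the paper's own proof: build the complete graph whose vertices are the matched pairs of $M_0$, weight each edge $\{p_i,p_j\}$ by $\text{OScore}(\{p_i,p_j\})$, and identify $2$-divisions with matchings of that graph, then invoke a polynomial-time matching algorithm. If anything, your version is the more careful one: since KGP \emph{minimizes} the division score and a $2$-division must partition all of $M_0$, the right target is minimum-weight \emph{perfect} matching (with a zero-weight dummy vertex for odd $|M_0|$), exactly as you state, whereas the paper's text speaks of ``maximum weight matching'' and leaves the odd-size case implicit.
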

\vspace{-2ex}
\begin{proof}
	The overall idea of the proof is that, when $k=2$, the KGP is equiv. to finding the maximum weight matching on a general graph. We construct a graph $M_\mathcal{G}$ from the baseline matching $M_0 = \{p | p \in P\}$ as follows. For each matched pair $p \in P$, we create a vertex $p$ and add it to graph $M_\mathcal{G}$. We create an edge between any two vertices $p$. It is a complete graph. 
	We define the weight of the edges as follows. For two vertices $p_i \in M_0$ and $p_j \in M_0$, we set the edge weight $w(p_i, p_j) = $ OScore$(\{p_i, p_j\})$, where OScore$(\{p_i, p_j\})$ is defined in Def.~\ref{def:oscore}. 
	
	We could then show that the maximum weight matching on $M_\mathcal{G}$ corresponds to the optimal solution to the KGP when $k=2$. We defer the details of the proof to the appendix (\secref{subsec:appendix_kswitch}).
	\vspace{-2ex}

\end{proof}

\begin{theorem} 
	When $k=2$, KGP is solvable with time complexity $\mathcal{O}(|M_0|^{2.37})$.
	\label{theo:kpgk2_time}
\end{theorem}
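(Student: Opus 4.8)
The plan is to read the bound off Theorem~\ref{theo:kpgk2}. Its proof shows that, for $k=2$, an optimal $k$-division of $M_0$ is exactly an optimal-weight perfect matching of the complete graph $M_\mathcal{G}$, which has $n=|M_0|$ vertices (one per matched pair $p\in M_0$) and edge weights equal to the pairwise OScores. Minimum-weight and maximum-weight perfect matching are interchangeable via the standard transformation $w(e)\mapsto C-w(e)$ with $C$ larger than the sum of all edge weights, so it suffices to bound the time to compute a maximum-weight perfect matching on an $n$-vertex complete graph.

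Before quoting the matching algorithm I would dispose of one boundary case. By Def.~\ref{def:k-division} a $2$-division may contain a single group of size $<2$ when $|M_0|$ is odd, i.e.\ it is a near-perfect rather than a perfect matching of $M_\mathcal{G}$. I would reduce this to the perfect-matching case by adding one dummy vertex adjacent to every real vertex with weight $0$: a minimum-weight perfect matching of the augmented $(n{+}1)$-vertex graph leaves unmatched exactly the real vertex absorbed by the dummy, which is precisely the freedom allowed by the definition, and it adds only $O(n)$ vertices and edges.

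It then remains to invoke a fast maximum-weight perfect matching routine. Edmonds' blossom algorithm already solves the problem in $\mathcal{O}(n^{3})$ time; to obtain the stated bound I would instead use the algebraic algorithm based on fast matrix multiplication (Mucha and Sankowski for the cardinality case, extended to the weighted case by Sankowski and by Cygan, Gabow and Sankowski), which runs in $\tilde{\mathcal{O}}(W\,n^{\omega})$ time when the weights are integers bounded by $W$, with $\omega\approx 2.37$ the exponent of matrix multiplication. In a PBTA instance the OScores are sums of Euclidean distances over the bounded region containing the perturbed locations; rescaling the coordinates onto a grid of polynomial size perturbs $\mathrm{score}(D)$ by a negligible amount and makes $W=\mathrm{poly}(n)$, so the running time becomes $\tilde{\mathcal{O}}(n^{\omega})=\mathcal{O}(|M_0|^{2.37})$ up to logarithmic factors.

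The genuinely delicate point is this last step: the algebraic matching algorithm's complexity is weight-sensitive, so the argument must establish that discretizing the OScore values to polynomially bounded integers neither changes the optimal $k$-division nor hides a super-constant factor (a routine scaling argument, given that coordinates are supplied with bounded precision over a bounded domain). The remaining ingredients — the complete-graph construction inherited from Theorem~\ref{theo:kpgk2}, the dummy-vertex padding, and the min-to-max weight flip — are standard and contribute only lower-order overhead.
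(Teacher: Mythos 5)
Your reduction to maximum-weight matching on the complete graph $M_\mathcal{G}$ with $|M_0|$ vertices is exactly the paper's starting point, but from there you take a genuinely different route. The paper first converts the complete general graph to a bipartite graph by duplicating every vertex (the \texttt{GraphTransform} step in the appendix) and then cites matrix-multiplication-based algorithms for \emph{bipartite} weighted matching on the doubled vertex set; you instead invoke the algebraic algorithms for weighted matching on \emph{general} graphs (Mucha--Sankowski, Cygan--Gabow--Sankowski) directly, padding with a dummy vertex to absorb the odd-$|M_0|$ case. Your route is arguably safer: the standard vertex-duplication trick does not in general preserve the matching structure (a matching in the duplicated bipartite graph can cover an original vertex through both of its copies, yielding a path or cycle cover rather than a matching of $M_\mathcal{G}$), so the paper's transformation would need additional justification that it does not supply, whereas the general-graph algebraic algorithms need no such reduction. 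You are also more candid than the paper about the two hypotheses hiding behind the exponent $2.37$: the $\tilde{\mathcal{O}}(W n^{\omega})$ algorithms are weight-sensitive and require integer weights of polynomial magnitude, so the OScores (sums of Euclidean distances, hence irrational in general) must be discretized, and one must argue that the discretization preserves the optimum; the paper states the bound unconditionally and ignores both this and the suppressed logarithmic factors. In short, your proposal reaches the same bound by a cleaner path and makes explicit the assumptions the paper leaves implicit; the one step you leave as a sketch (the polynomial-weight scaling argument) is precisely the step any complete proof of the stated bound would have to carry out.
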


\begin{proof}
	For the maximum weight matching problem on a general graph, it is shown that it is among the \textit{hardest} problem that could be solved in polynomial time, with $O(|V|^{3})$ time complexity \cite{edmond1965}, where $V$ denotes the vertex set of the graph. In our setting, $|V| = |M_0|$, the size of the baseline matching. 
	
	In our setting, as the constructed graph $M_\mathcal{G}$ is a complete graph, and convertible to bipartite graph via a simple \textit{graph transformation} method (please refer to the appendix, \secref{subsec:appendix_kswitch}). There exists matrix multiplication algorithms on the transformed bipartite graph to obtain the maximum weight matching with time complexity $O(|V'|^{2.37})$. Because the graph transformation doubles the number of vertices of the original graph $M_\mathcal{G}$, we have $|V'| = 2 * |V| = 2|M_0|$. Overall, KGP problem is solvable in $\mathcal{O}(|V'|^{2.37}) \to \mathcal{O}(2^{2.37} * |V|^{2.37}) \to \mathcal{O}(|M_0|^{2.37})$. If we let $n=\max(|W|, |T|)$, because $|M_0|$ is bounded by $n$, then KGP is solvable in $ \mathcal{O}(n^{2.37})$. 
	\vspace{-2ex}

\end{proof}

\subsubsection{Algorithms for $k \geq 3$}

\begin{theorem} When $k \geq 3$, KGP has no polynomial time approximation algorithm with finite approximation ratio unless P=NP. 
	\figureBelowMargin\vspace{-2ex}

	\label{theo:kpgk3}
\end{theorem}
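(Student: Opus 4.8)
The plan is to exhibit, for every prescribed ratio $\rho\ge 1$, a polynomial-time reduction from an NP-complete partitioning problem to KGP that separates the optimum by more than a factor $\rho$; doing this for all $\rho$ gives the claim. The first thing to note is that the cleanest route --- forcing $\text{OPT}=0$ on YES-instances --- is not available here: since the OScore of a $k$-group (\defref{def:koscore}) is a sum of planar Euclidean distances, it is nonnegative and equals $0$ exactly when the worker- and task-locations of the grouped pairs all coincide, and coincidence of both coordinates is an equivalence relation, so a score-$0$ $k$-division exists iff a trivial divisibility condition on the coincidence classes of $M_0$ holds --- decidable in polynomial time. We therefore build a \emph{scalable} gap instead, using a ``penalty'' distance $M$ that the reduction is free to make polynomially large.

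For $k=3$ I would reduce from \textsc{Partition into Triangles} (NP-complete on general graphs); for general $k\ge 3$, from an analogous NP-complete problem, e.g. partitioning a vertex set into copies of $K_k$ or a $k$-uniform exact-cover variant. From a source instance I create one matched pair per element, together with \emph{copy} pairs (one per incidence of an element with a source constraint) and \emph{dummy} pairs for padding, and I place all worker- and task-perturbed locations in $\mathbb{R}^2$ so that: (i) any $k$ pairs corresponding to a legitimate group of the source instance lie in a common region of diameter $\delta$, giving that group OScore $\mathcal{O}(k^2\delta)$; and (ii) any $k$ pairs that do not form a legitimate group necessarily contain two pairs placed at distance at least $M$, so that group has OScore at least $M$. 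Choosing $\delta$ so that the ``all-legitimate'' $k$-division has total score $a\le 1$ (e.g. $\delta=\Theta(1/|M_0|^{2})$) and setting $M=\rho a+1$, the construction yields: a valid source solution induces a $k$-division of score $\le a$, whereas if no valid solution exists then every $k$-division contains a non-legitimate group, hence a pair at distance $\ge M$, so $\text{score}(D^\ast)>M>\rho a$. A $\rho$-approximation returns a valid $k$-division of score $\le\rho\cdot\text{OPT}$, so thresholding its score at $\rho a$ decides the source problem; since this works for every $\rho$, no finite-ratio approximation can exist unless $\mathrm{P}=\mathrm{NP}$.

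The hard part is the geometric realization behind (i)--(ii). Because the OScore is the $\ell_1$-sum of two planar Euclidean metrics it obeys the triangle inequality, so ``cheap to group together'' is nearly transitive and one cannot freely prescribe which matched pairs are near and which are far; this is precisely why the copy/dummy gadgets are needed. The inter-region distances --- between the region hosting an element's copies, the regions of its incident source-constraint gadgets, and the dummy pool --- must be tuned so that (a) in any $k$-division of score below $M$ each element consistently ``selects'' one of its copies, so a cheap division can be decoded into a valid source solution, and (b) in the NO case \emph{no} re-grouping, however clever, can avoid a group with an $\ge M$ pair. Establishing (b), the NO-instance lower bound, together with verifying that a YES-instance really does admit a feasible $k$-division with all groups of size exactly $k$ (the single permitted undersized group absorbing only the global remainder $|M_0|\bmod k$), is where essentially all of the work lies. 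Forcing the gadget sizes to be multiples of $k$ and carrying the argument through for every $k\ge 3$ is then routine padding on top of the $k=3$ case.
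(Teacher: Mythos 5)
Your high-level plan---a scalable gap reduction from an NP-complete partition-into-$K_k$ (or exact-cover) problem---is a legitimate template, and it genuinely differs from the paper's route: the paper instead reduces from \emph{Perfectly Balanced Graph Partition}, forming the complete graph on matched pairs with edge weight $\text{OScore}(\{p_i,p_j\})$ and importing the known inapproximability of PBGP wholesale. But your proposal is not a proof, and the missing step is precisely the one you flag and then defer: the planar realization of conditions (i)--(ii). This obstruction is not routine gadgeteering. Suppose two groups $\{p_v,p_a,p_b\}$ and $\{p_v,p_c,p_d\}$ are both legitimate in the source instance (overlapping candidate triangles exist in every instance that is not already trivially decomposed). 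Cheapness of both groups forces $d(l'_{p_a.w}, l'_{p_v.w})\le\delta$ and $d(l'_{p_c.w}, l'_{p_v.w})\le\delta$, hence $d(l'_{p_a.w}, l'_{p_c.w})\le 2\delta$ by the triangle inequality, and the same holds for the task coordinates. So the \emph{illegitimate} group $\{p_a,p_c,\cdot\}$ is also cheap, and your NO-case lower bound (``every $k$-division contains a pair at distance $\ge M$'') collapses. Copy and dummy pairs do not escape this: each copy of an element must sit near its element's incident constraint gadgets to be usable in some cheap group, which forces the copies---and their prospective partners---near one another. Any reduction that achieves an unbounded gap for an objective that is a sum of metric distances must defeat this near-transitivity, and you have not exhibited a mechanism that does; asserting that establishing (b) ``is where essentially all of the work lies'' concedes the theorem rather than proving it.

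For calibration: the paper's proof sidesteps your obstruction only by never confronting it. In its only-if direction it assigns the arbitrary PBGP edge weights $w(e)$ directly as OScores, without showing such weights are realizable as $d(l'_{p_i.w}, l'_{p_j.w}) + d(l'_{p_i.t}, l'_{p_j.t})$ for actual planar perturbed locations, so the hardness it establishes is for an abstract weighted-graph version of KGP rather than for the geometric instances the algorithm actually faces. Your instinct that metric realizability is the crux is therefore correct and is the more honest framing; but to complete the argument you must either construct an embedding that beats the triangle-inequality transitivity (which I do not believe is possible for an unbounded gap) or restrict the claim to non-metric OScore generalizations, as the paper implicitly does.
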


\begin{proof}
	We show a polynomial reduction of the \textit{Perfectly Balanced Graph Partition} (PBGP) problem to the KGP. Because PBGP has no polynomial approximation algorithm with finite approximation ratio unless P=NP \cite{DBLP:journals/mst/AndreevR06}, our KGP has the same hardness. 
	
	We review PBGP: given a graph $G=(V,E)$, with weight $w(e)$ on each edge $e$. For an integer $p \geq 2$, a $p$-partition is $p$ disjoint subsets with equal sizes: $V=V_1 \cup \cdots \cup V_p$. We assume $|V|$ is a multiple of $p$ here, so $|V_1| = \cdots |V_p| = |V|/p$. The decision version of PBGP is that given a positive integer $W$, is there a $p$-partition such that for the cross edges set $E' \subset E$, which have two endpoints in the two different sets $V_i$, i.e., $E' = \{(v_i, v_j)\in E | v_i \in V_i, v_j \in V_j , i \neq j \}$, the sum of the weights of edges in $E'$ is less or equal to the given integer $W$, i.e., $\sum_{e \in E'}w(e) \leq W$?
	
	Please refer to the appendix (\secref{subsec:appendix_kswitch}) for the details of the proof. Because we could show that PBGP  $\leq _{p}$ KGP, and since PGBP has no polynomial time approximation algorithm with finite approximation factor unless P=NP \cite{DBLP:journals/mst/AndreevR06}, KGP has the same hardness. 
\vspace{-2ex}
\end{proof}

Because of the intractability of KGP, we propose an efficient greedy method to find a $k$-division. The greedy algorithm \textit{packs} $k$-group one by one, each time starting with an empty set, and keeps adding a new $2$-group $\{p_i, p_j\}$ with smallest OScore to the current $k$-group. If $k$ is an odd number, it randomly picks the last matched pair $p$. The detailed steps is shown in Alg.~ \ref{algo:greedy_grouping}. 

\begin{algorithm}[t]
	\DontPrintSemicolon
	\KwIn{ A baseline matching $M_0$. Perturbed locations $l_{w'}$ and $l_{t'}$ for each $w\in W$ and $t \in T$. }
	\KwOut{A $k$-division $D$.}

	Initialize a heap $h$, an empty set $D$ \;
	\ForEach{$p_1 \in M_0$}{
		\ForEach{$p_2 \in M_0$}{
			\If{$p_1.w \neq p_2.w$}{
				oscore$:= d(l'_{p_1.w}, l'_{p_2.w}) + d(l'_{p_1.t}, l'_{p_2.t})$ \;
				$h$.insert($\{p_1, p_2\}$, oscore) \;
				\label{algo:line:heap}
			}
		}
	}

	$d := \lceil |M_0| / k \rceil$ \tcp*{Calculate how many $k$-groups} 
	\label{algo:line:kgroups}

	\For{$i:=0$; $i < d$; $i=i+1$}{
		$g:=\{\}$ \;
		\While{$g.size() < k-1$}{
			\label{algo:line:forloop_start}

			$p_1, p_2:=h$.pop() \;
			$g$.Insert($p_1$, $p_2$) \;
			Mark $p_1.w$ or $p_2.w$ as used\;
			\label{algo:line:forloop_end}

		}
		$D$.Insert($g$) \;
	}
	\Return{$D$}\;
	\caption{\texttt{Greedy-Grouping} }
	\label{algo:greedy_grouping}
\end{algorithm}

At Line~\ref{algo:line:heap}, the algorithm uses a heap storing all combinations of pairs in the baseline matching $\{p_i, p_j\}$ with its OScore as the sorted key. Every time we could pop the pair $\{p_i, p_j\}$ with the smallest OScore and add them to the current $k$-group. After each $k$-group is formed, we continue to the next one until a $k$-division is obtained and returned. Line~\ref{algo:line:forloop_start}-\ref{algo:line:forloop_end} greedily add two matched pairs to the current $k$-group $g$, until its size reaches $k$. In total, we form $d$ groups, as calculated at Line~\ref{algo:line:kgroups}. We omit some details for checking used workers and randomly picking the last item (see full version in the appendix, \secref{subsec:appendix_kswitch}). 

The time complexity of Alg.~ \ref{algo:greedy_grouping} is dominated by the heap construction, which takes $\mathcal{O}(e\log e)$, where $e$ is the total number of elements inserted to the heap. We know $e$ is all the combinations of matched pairs, as shown at Line 2-3, so $e=\mathcal{O}(n^2)$, where $n=\max(|W|,|T|)$, the size of the baseline matching. In conclusion, for Alg.~\ref{algo:greedy_grouping} has an $\mathcal{O}(n^2\log n^2) \to \mathcal{O}(n^2\log n)$ time complexity.  

\subsection{k-HE protocol}
\label{subsec:kHE}

The purpose of the previous step Grouping is to divide all workers into small groups of size $k$. In this section, we introduce how workers inside each small $k$-group utilize secure computation in parallel to increase the number of successfully assigned tasks via task swapping. 

Our $k$-HE protocol runs in the small group of size $k$. The secure computation is based on the Paillier Crypto-system \cite{DBLP:conf/eurocrypt/Paillier99} in Homomorphic Encryption (HE), which allows Homomorphic Addition and Homomorphic Multiplication. Similar to the global task assignment setting in HESI framework \cite{DBLP:conf/edbt/LiuCZZZQ17}, we also use HE for secure distance calculation. In contrast, our protocol is restricted to small size $k$ (ranging from 2 to 8). We allow workers and tasks inside the small group to communicate the encrypted true locations with one another, and if the number of successfully assigned tasks could be improved based on their true locations, then workers swap tasks between themselves. 

Figure \ref{fig:k-HE} gives an illustration of the protocol. Two entities (either worker or task) out of the $k$-group are randomly elected and serve as the proxy servers $P_a$ and $P_b$. Then, we perform secure distance calculation between each pair of workers and tasks following the major steps of HESI \cite{DBLP:conf/edbt/LiuCZZZQ17} (details deferred to the appendix, \secref{subsec:appendix_kHE}).  As for the time complexity, two proxy servers enumerate all combinations of worker-task pairs in the $k$-group and compute the true distance, with $\mathcal{O}({k \choose 2})\to \mathcal{O}(k^2)$ time. Then, $P_b$ runs a matching algorithm w.r.t. the true distances, with $O(k^{3})$ time using the max-flow algorithm similar to the Oblivious-M baseline (Sec. ~\ref{subsec:om}). The overall time complexity is $O({k \choose 2} + k^{3}) \to O(k^{3})$. 

\begin{figure}[ht!]\centering \figureTopMargin\vspace{0ex}
	\scalebox{0.4}[0.4]{\includegraphics{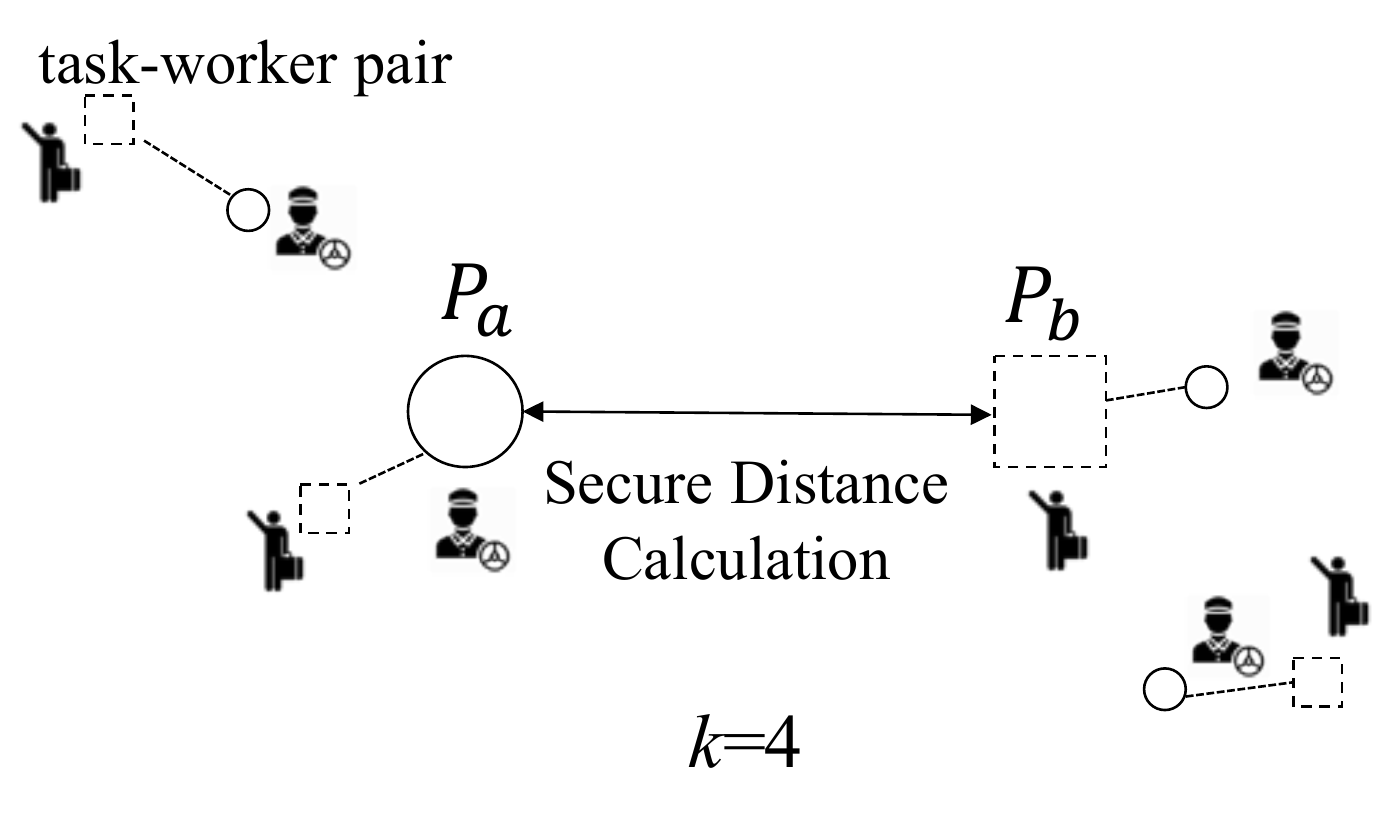}}\figureCaptionMargin
	\caption{\small k-HE protocol (the case for $k=4$)}
	\figureBelowMargin\vspace{0ex}
	\label{fig:k-HE}
\end{figure}

\subsection{$\lambda$-Opting}
\label{subsec:lambdaOpting}

The previous section describes how small $k$-groups execute $k$-HE protocol in parallel and workers swap tasks if task swapping increases the number of successfully assigned tasks within the group. The last phase of $k$-Switch is $\lambda$-Opting, which iterates Grouping and $k$-HE protocol for $\lambda$ rounds. It also stops if no utility gain is obtained at the current round. $\lambda$ is a system parameter controlling the trade-off between utility gain and system overhead.

\begin{figure}[h!]\centering \figureTopMargin\vspace{1ex}
	\scalebox{0.4}[0.4]{\includegraphics{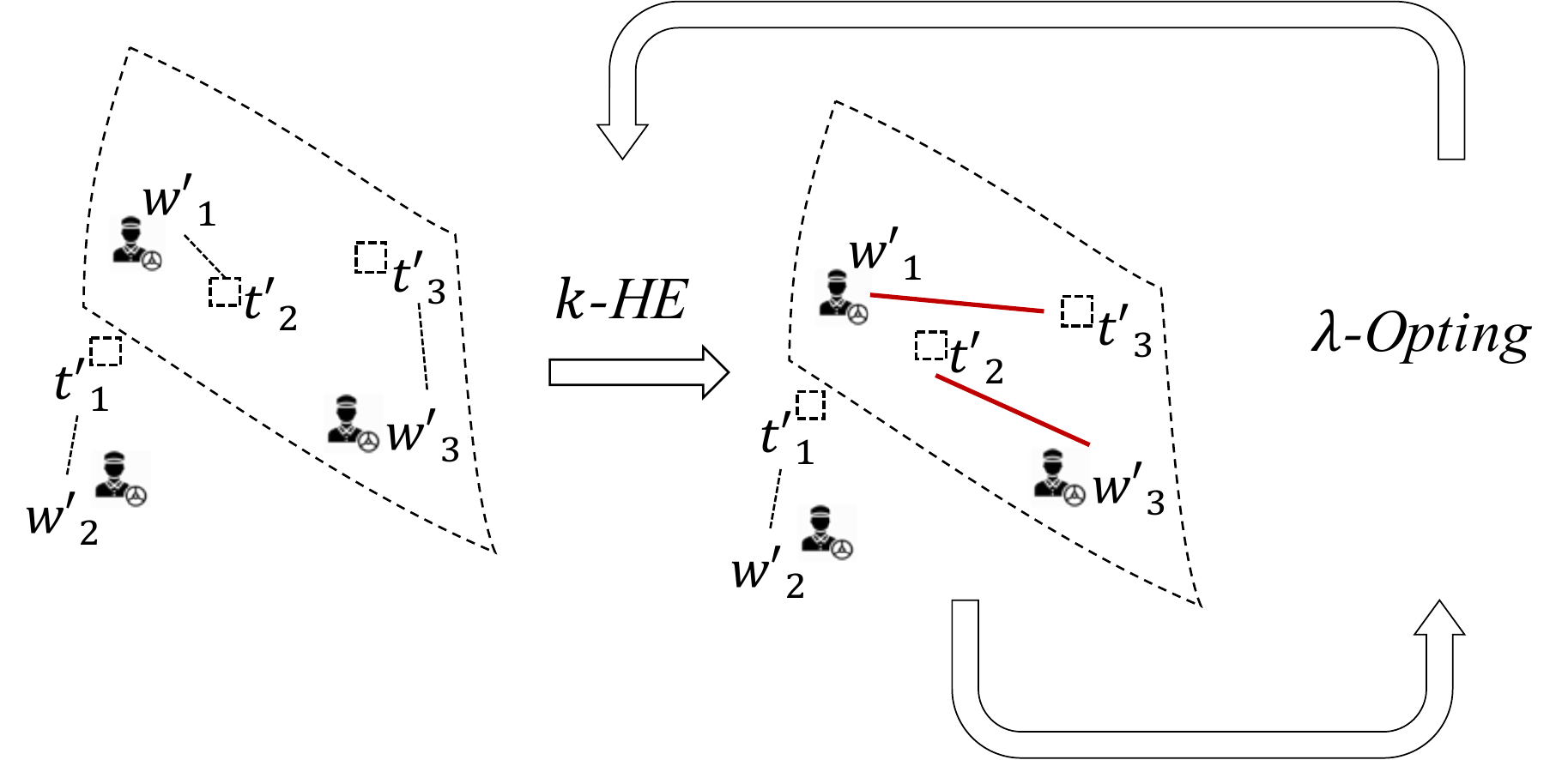}}
	\caption{\small $\lambda$-Opting.}
	\label{fig:lambda_opt}
	\figureBelowMargin\vspace{0ex}

\end{figure}

Each round of $\lambda$-Opting executes the grouping and the $k$-HE protocol, taking $\mathcal{O}(k^{3} + n^2\log n)$ time. The time complexity of $k$-Switch is thus $\mathcal{O}(\lambda k^{3} + \lambda n^2\log n)$, where $n = \max(|W|, |T|)$.

\section{Experimental Study}
\label{sec:experiment}

We conduct extensive experiments on both the real-world dataset and the synthetic dataset to validate the effectiveness and efficiency of our proposed $k$-Switch method. 

With respect to the effectiveness, as measured by the number of successfully assigned tasks, experiments show that $k$-Switch assigns up to 5.9$\times$ more tasks than other batch-based baselines, and assigns up to 1.74$\times$ more tasks than the competing online method SCGuard. In terms of efficiency, as measured by the running time, our method is efficient, finishing within 1.5 minutes on datasets of moderate sizes (500 workers and 500 tasks). While being slightly slower than other methods (slower than SCGuard by about a constant factor of 2), $k$-Switch is considered cost-effective because it trades off minor system overhead with considerable utility gain. 

\subsection{Experimental setup}
\label{sec:experimentSetup}

\subsubsection{Datasets}

We conduct the experiments on both the real-world and the synthetic dataset. The real-world dataset is the taxi dataset from Didi Chuxing \cite{didi2018}. For the synthetic dataset, we randomly sample workers and tasks' locations from the range $[0, 8000] \times [0,8000]$.
\vspace{-1ex}
\subsubsection{Baselines}

The baselines we test include the two baseline solutions we propose: \textbf{Oblivious-M} (short as OM, introduced in Sec.~\ref{subsec:om}) and \textbf{Oblivious-RR} (ORR, Sec.~\ref{subsec:orr}). In addition, we test \textbf{SCGuard} (SCG, \cite{to2018}), an online method allowing each newly arrived task to interactively check several other workers to see whether the task could be assigned. 
\vspace{-1ex}
\subsubsection{Metrics and control variables}

\noindent\textbf{Control variables}: 
Number of workers $w \in [\textbf{100}, 200, 500, 1000]$. 
Number of tasks $t \in [\textbf{100}, 200, \\ 500, 1000]$. System parameter $k \in [\textbf{2}, 4, 6, 8]$, $\lambda \in [5, 10, \textbf{20}]$. Privacy requirement $\epsilon \in [\textbf{0.4}, 1.25, 2.5]$. We set $r=1000$m as constant. Varying $\epsilon$ corresponds to varying Geo-I privacy level $l = \epsilon r \in [\textbf{400}, 1250, 2500]$. Default parameters are in boldface. 

\noindent\textbf{Metrics}. We focus on 1) effectiveness (\textit{utility}), measured by the number of tasks assigned, and 2) efficiency, measured by running time in seconds. 

\noindent\textbf{System configuration}. The experiments are performed on a MacBook Pro with 1.4GHz Quad-Core Intel Core i5 and 16GB 2133MHz LPDDR3 memory, running MacOS 11.0. The methods were implemented in Python. \vspace{-1ex}

\subsection{Experimental results}
\label{sec:experimentResults}

\subsubsection{Effectiveness}

\noindent\textbf{Overview:} $k$-Switch (short as KS in the figures) outperforms other baselines by significant margins, over different privacy levels (Fig.~\ref{fig:exp_1_utility}), across datasets of different sizes (Fig.~\ref{fig:exp_3_utility_size}). For a stricter privacy parameter $\epsilon=0.4$ on the taxi dataset (shown in Fig.~\ref{subfig:exp_1_utility_real}), $k$-Switch achieves 5.9$\times$ improvement over baseline Oblivious-M, and 1.74$\times$ improvement over the competing online method SCGuard.

\begin{figure}[t!]\centering \figureTopMargin\vspace{-1ex}
	\subfigure[][{\scriptsize Real-world dataset}]{
		\scalebox{0.23}[0.23]{\includegraphics{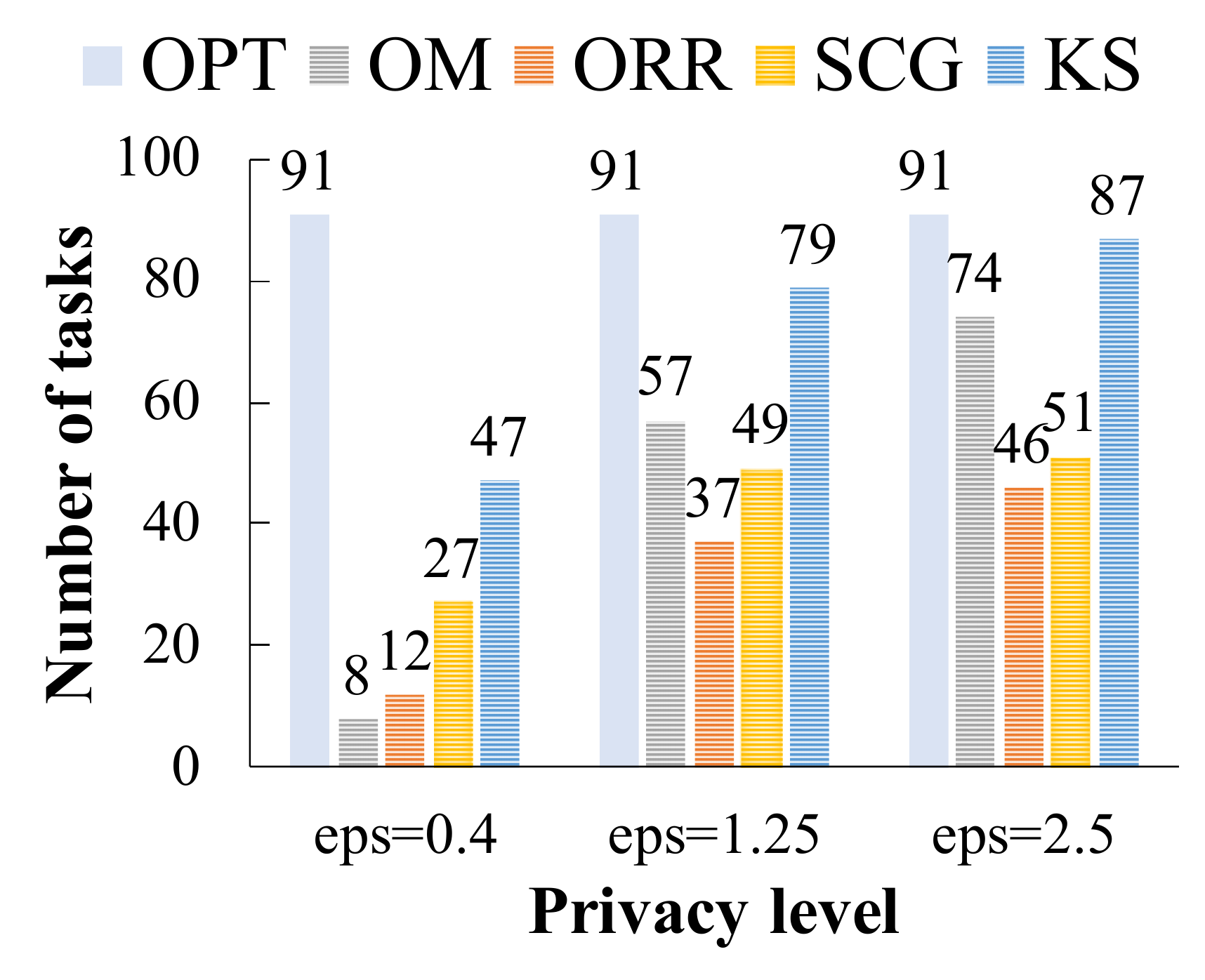}}
		\label{subfig:exp_1_utility_real}}
	\subfigure[][{\scriptsize  Synthetic dataset}]{
		\scalebox{0.23}[0.23]{\includegraphics{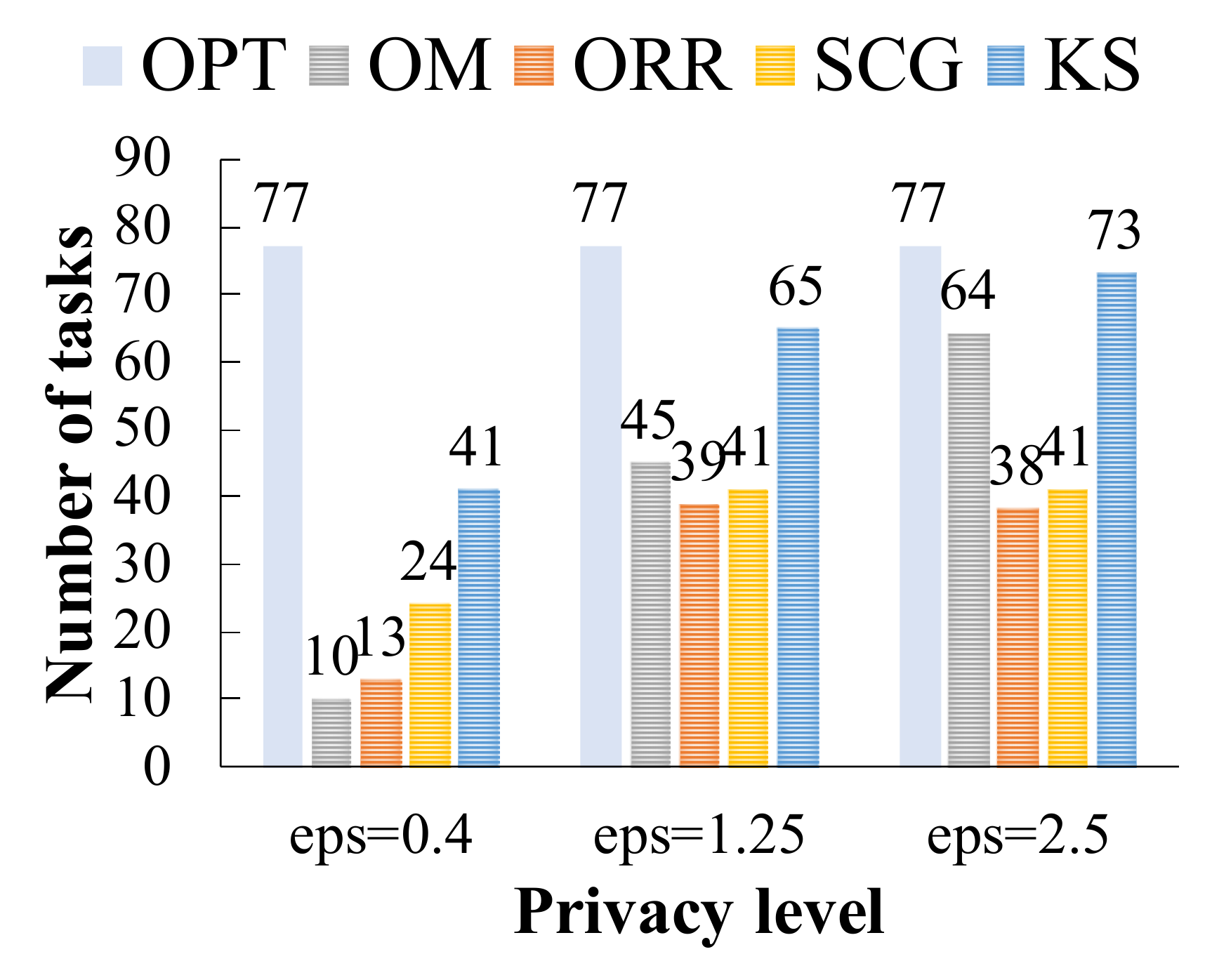}}
		\label{subfig:exp_1_utility_syn}}\figureCaptionMargin
    \caption{\small Number of tasks assigned for different methods, over different privacy levels, 100 workers vs. 100 tasks, $k=2, \lambda=20$.}
    \figureBelowMargin\vspace{0ex}
	\label{fig:exp_1_utility}
\end{figure}

\fakeparagraph{Details of results} Fig.~\ref{fig:exp_1_utility} shows the number of successfully assigned tasks obtained by different methods, over different privacy levels on a dataset of 100 workers vs. 100 tasks. We compare the optimal matching with the Oblivious baseline methods. The optimal matching (OPT) is obtained by using the ground-truth locations, which are not available in the inputs to our PBTA problem. The Oblivious baselines OM and ORR are using only the perturbed locations. At privacy level $\epsilon=0.4$ (shown in Fig.~\ref{subfig:exp_1_utility_real}), the gap between OPT and OM is 82. The OPT is about 11 times larger than OM. This validates the motivation of our research: privacy-preserving techniques perturb the locations of workers and tasks, and directly assigning tasks based on perturbed locations  is erroneous and prone to sub-optimal assignment. The gap is also observable on the synthetic dataset, shown in Fig.~\ref{subfig:exp_1_utility_syn}. 

Varying privacy levels: $k$-Switch outperforms all other methods on different privacy levels, assigning 47 tasks for $\epsilon=0.4$ on taxi data (shown in Fig.~\ref{subfig:exp_1_utility_real}), achieving 5.9$\times$ improvement over OM, which assigns 8 tasks. It also achieves 1.74$\times$ improvement over the SCGuard, which assigns 27 tasks. When $\epsilon$ gets larger, the privacy requirement gets less strict, the gap between the OPT and the oblivious OM and ORR gets smaller. For $\epsilon=2.5$, OM assigns 74 tasks, much closer to the 91 tasks from the OPT solution, as compared to a stricter $\epsilon$. Nevertheless, $k$-Switch delivers strong performance, and assigns 87 tasks, which is close to the OPT. 

\begin{figure}[t!]\centering \figureTopMargin\vspace{-1ex}
	\subfigure[][{\scriptsize Real-world dataset}]{
		\scalebox{0.23}[0.23]{\includegraphics{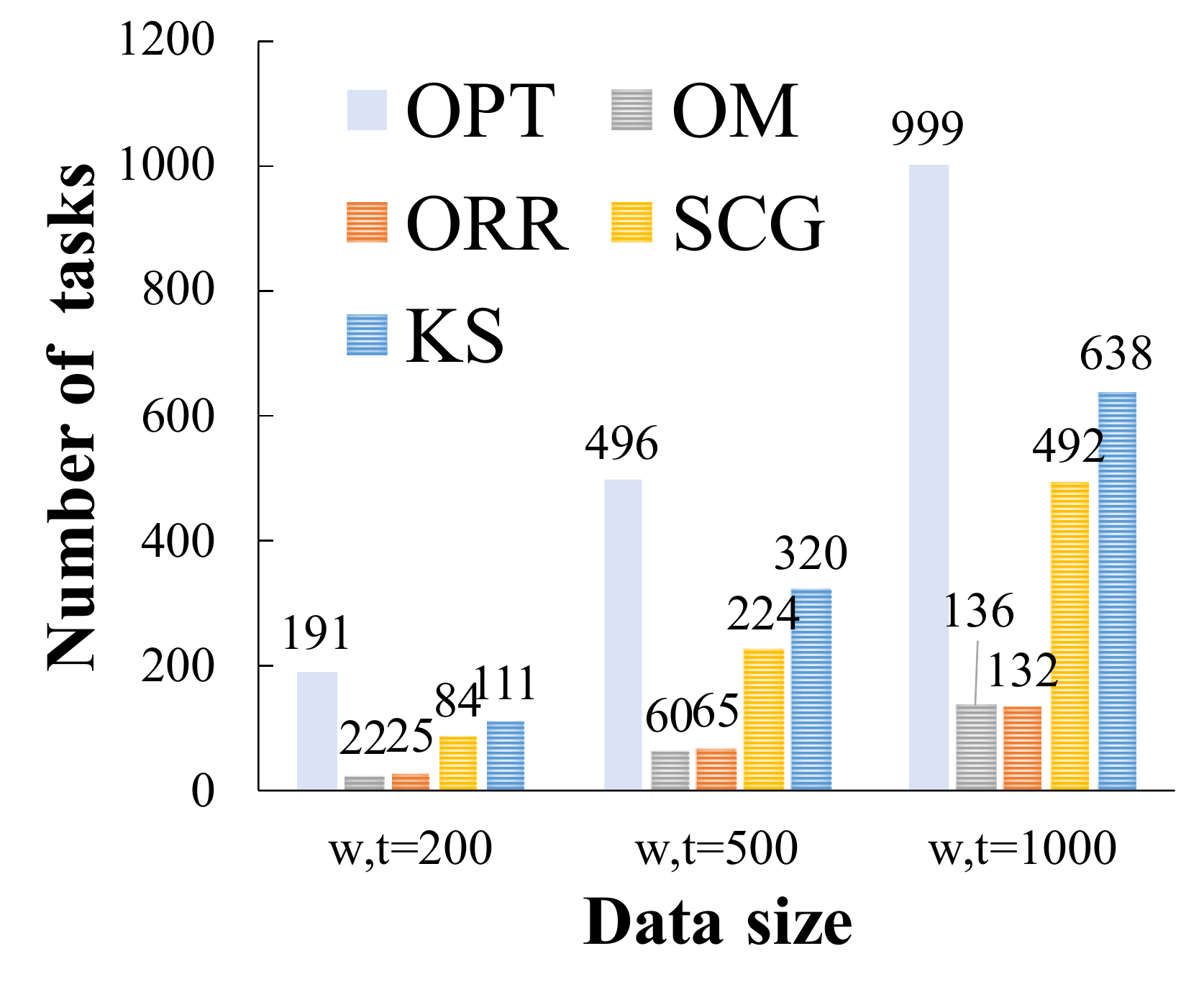}}
		\label{subfig:exp_3_util_size_real}}
	\subfigure[][{\scriptsize  Synthetic dataset}]{
		\scalebox{0.23}[0.23]{\includegraphics{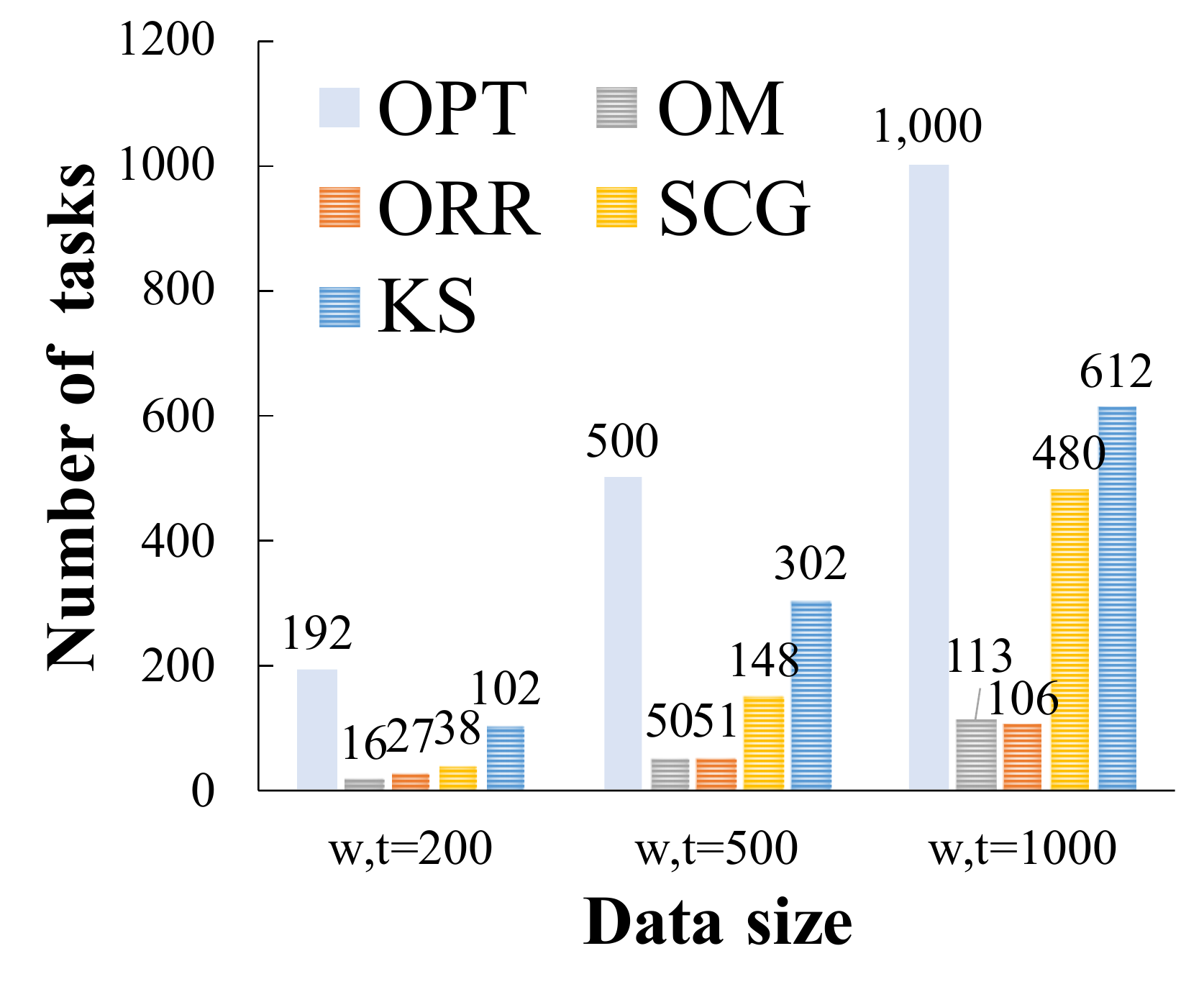}}
		\label{subfig:exp_3_util_size_syn}}\figureCaptionMargin
    \caption{\small Number of tasks assigned, on different input sizes (200-1000), $\epsilon=0.4,k=2, \lambda=20$.}
    \figureBelowMargin\vspace{0ex}
	\label{fig:exp_3_utility_size}
\end{figure}

Varying data size: Fig.~\ref{fig:exp_3_utility_size} shows the number of successfully assigned tasks of different methods, on datasets of different sizes. First, as the number of worker and task increases from 200 to 1000, the OPT result increases from 191 to 999 on taxi dataset (Fig.~\ref{subfig:exp_3_util_size_real}), and 192 to 1000 on the synthetic dataset (Fig.~\ref{subfig:exp_3_util_size_syn}). The gap between OPT and OM is consistently large. For the taxi dataset (Fig.~\ref{subfig:exp_3_util_size_real}), the OPT/OM ratio is $191/22 = 8.7$ for $w, t=200$ and $999 / 136 = 7.3$ for $w, t= 1000$, respectively. $k$-Switch significantly improves over the baseline OM and the online SCGuard method. It achieves 5.05$\times$ improvement over OM for $w,t=200$, 5.33$\times$ for $w,t=500$, and 5.64$\times$ for $w,t=1000$. As for the comparison with SCGuard, $k$-Switch obtains 1.32$\times$, 1.43$\times$, and 1.30$\times$ improvement for $w,t=200$, $500$, and $1000$ respectively. The same behavior is observed on the synthetic dataset (Fig.~\ref{subfig:exp_3_util_size_syn}). 
 
Varying parameter $k$: Fig.~\ref{fig:exp_4_k_utility} shows the number of successfully assigned tasks, over different $k$, on datasets of different sizes (Fig.~\ref{subfig:exp_4_k_utility_size}) and over different privacy requirements ((Fig.~\ref{subfig:exp_4_k_utility_eps})). As Fig.~\ref{subfig:exp_4_k_utility_size} shows, when $k$ increases, the number of successfully assigned tasks increases. For the smallest data size, the utility increases from 47 to 75 tasks, as $k$ increases from 2 to 8. On the other hand, on the data size $w,t=100$, when we vary privacy parameters, the effect of $k$ is not as significant (Fig.~\ref{subfig:exp_4_k_utility_eps}). 

Varying parameter $\lambda$: We defer the results about varying the system parameter $\lambda$ to the appendix (\secref{subsec:appendix_experiments}). The results verify that $\lambda$ achieves a tradeoff between the utility and efficiency for $k$-Switch, which is consistent with our system design. 

\begin{figure}[t!]\centering \figureTopMargin\vspace{0ex}
	\subfigure[][{\scriptsize On different input sizes}]{
		\scalebox{0.23}[0.23]{\includegraphics{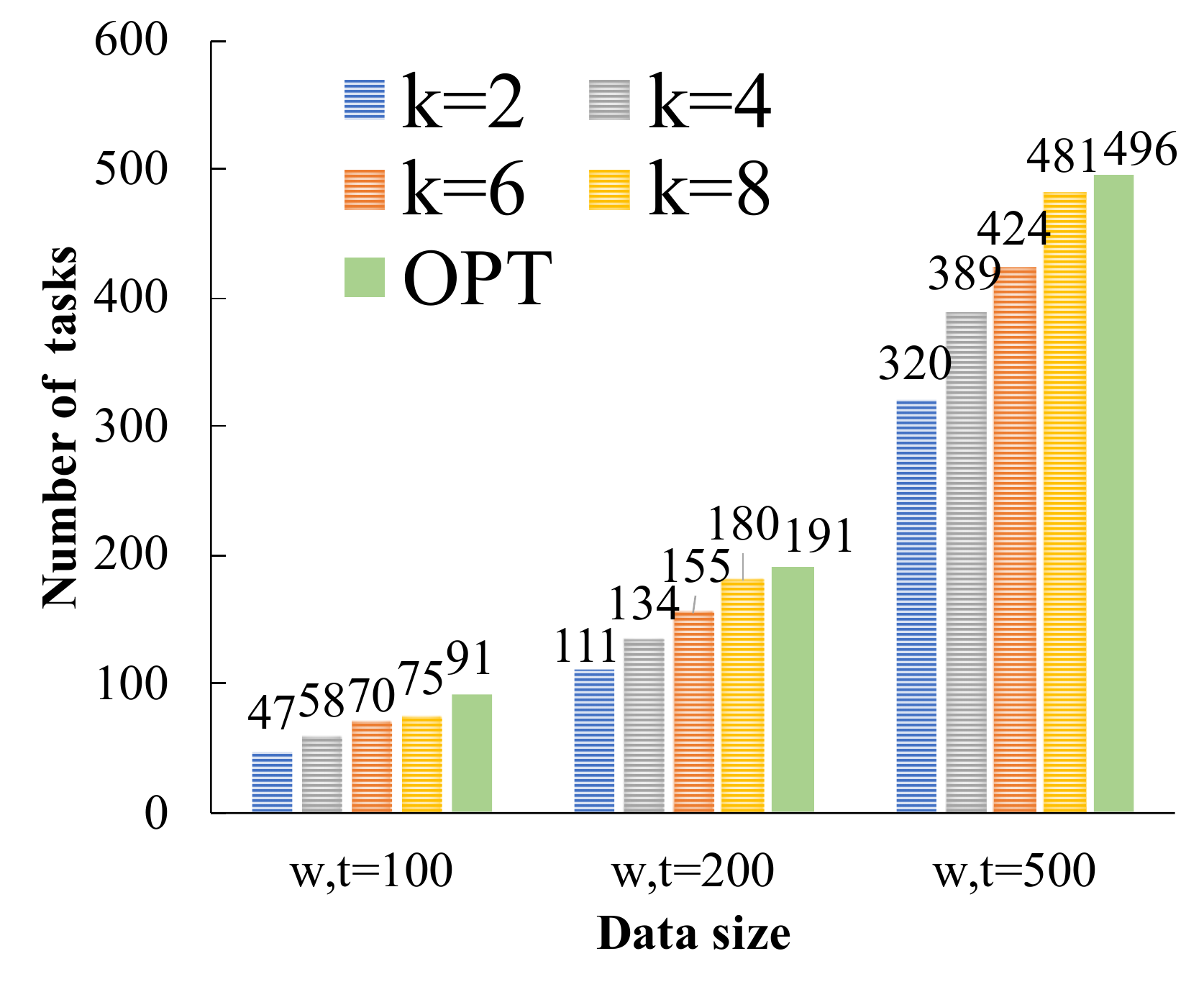}}
		\label{subfig:exp_4_k_utility_size}}
	\subfigure[][{\scriptsize  For different privacy parameter}]{
		\scalebox{0.23}[0.23]{\includegraphics{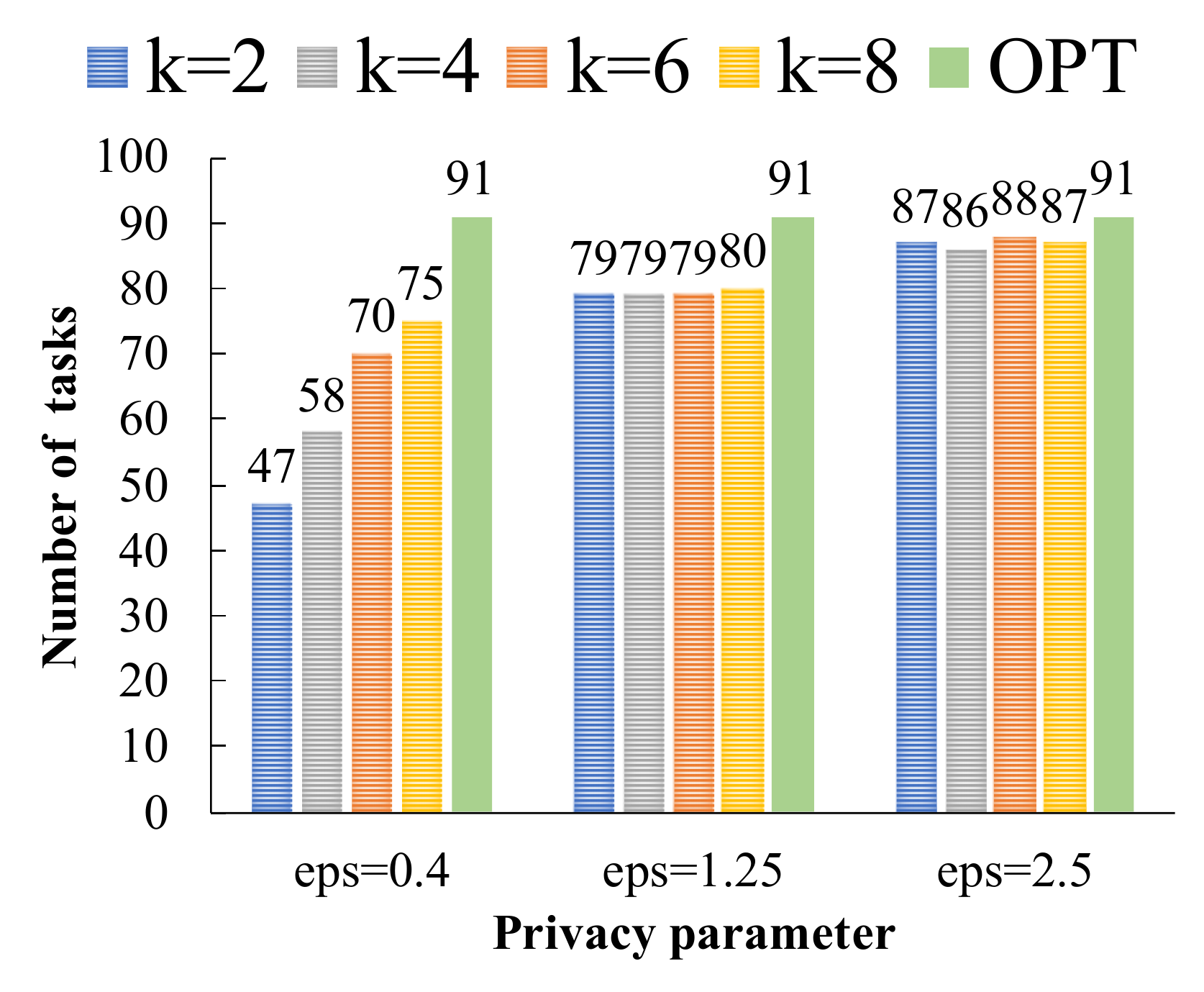}}
		\label{subfig:exp_4_k_utility_eps}}\figureCaptionMargin
    \caption{\small Effect of parameter $k$ for $k$-Switch method on the number of tasks assigned.}
    \figureBelowMargin\vspace{0ex}
	\label{fig:exp_4_k_utility}
\end{figure}

\subsubsection{Efficiency}
\fakeparagraph{Overview} While we expect $k$-Switch to be slower than other methods as the design trades off moderate system overhead with significant utility gain, experimental results show it is only slightly slower than other methods. On our default setting (Fig.~\ref{fig:exp_2_time}), 100 workers vs. 100 tasks, it takes only around 2 seconds to run on a laptop machine. When tested on larger sizes dataset (Fig.~\ref{fig:exp_3_time}), $k$-Switch shows stable efficient running time, a small constant factor slower than the competing SCGuard. Experiments verify our time complexity analysis of the methods. 
\begin{figure}[t!]\centering \figureTopMargin\vspace{-2ex}
	\subfigure[][{\scriptsize Real-world dataset}]{
		\scalebox{0.23}[0.23]{\includegraphics{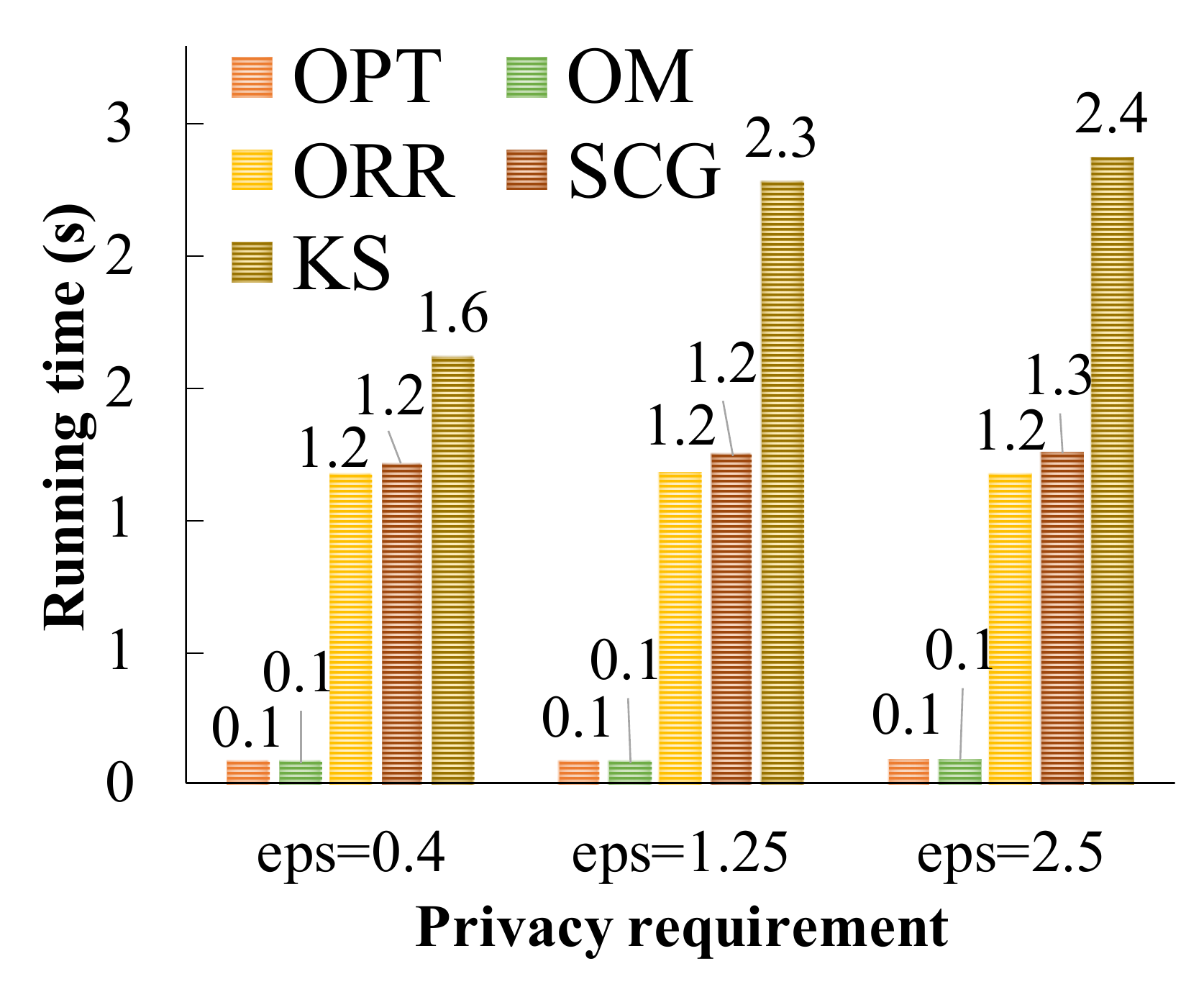}}
		\label{subfig:exp_2_time_real}}
	\subfigure[][{\scriptsize  Synthetic dataset}]{
		\scalebox{0.23}[0.23]{\includegraphics{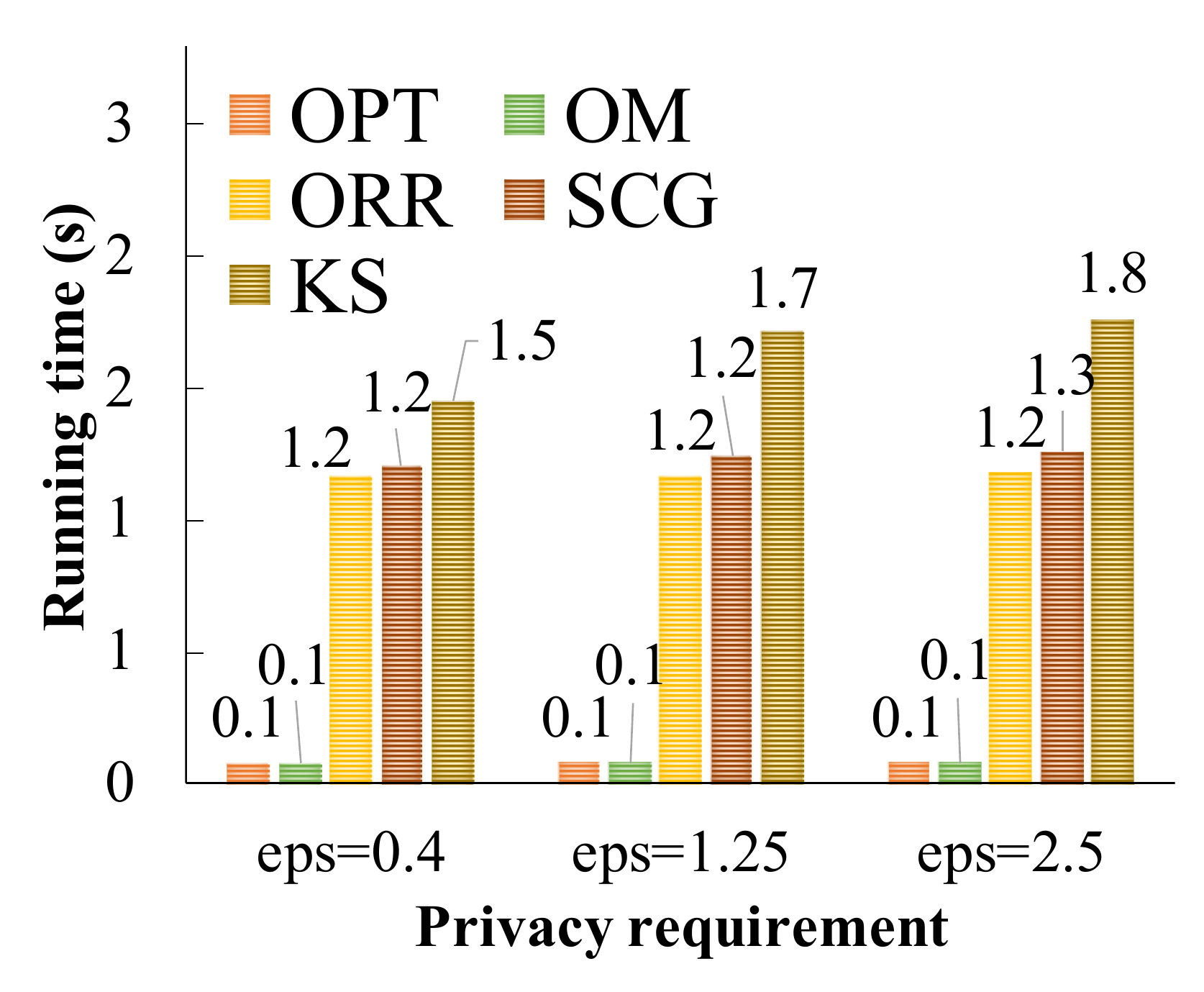}}
		\label{subfig:exp_2_time_syn}}\figureCaptionMargin
    \caption{\small Running time (seconds) for different methods, over different privacy levels, 100 workers vs. 100 tasks, $k=2$.}
    \figureBelowMargin\vspace{0ex}
	\label{fig:exp_2_time}
\end{figure}

\begin{figure}[t!]\centering \figureTopMargin\vspace{0ex}
	\subfigure[][{\scriptsize Real-world dataset}]{
		\scalebox{0.23}[0.23]{\includegraphics{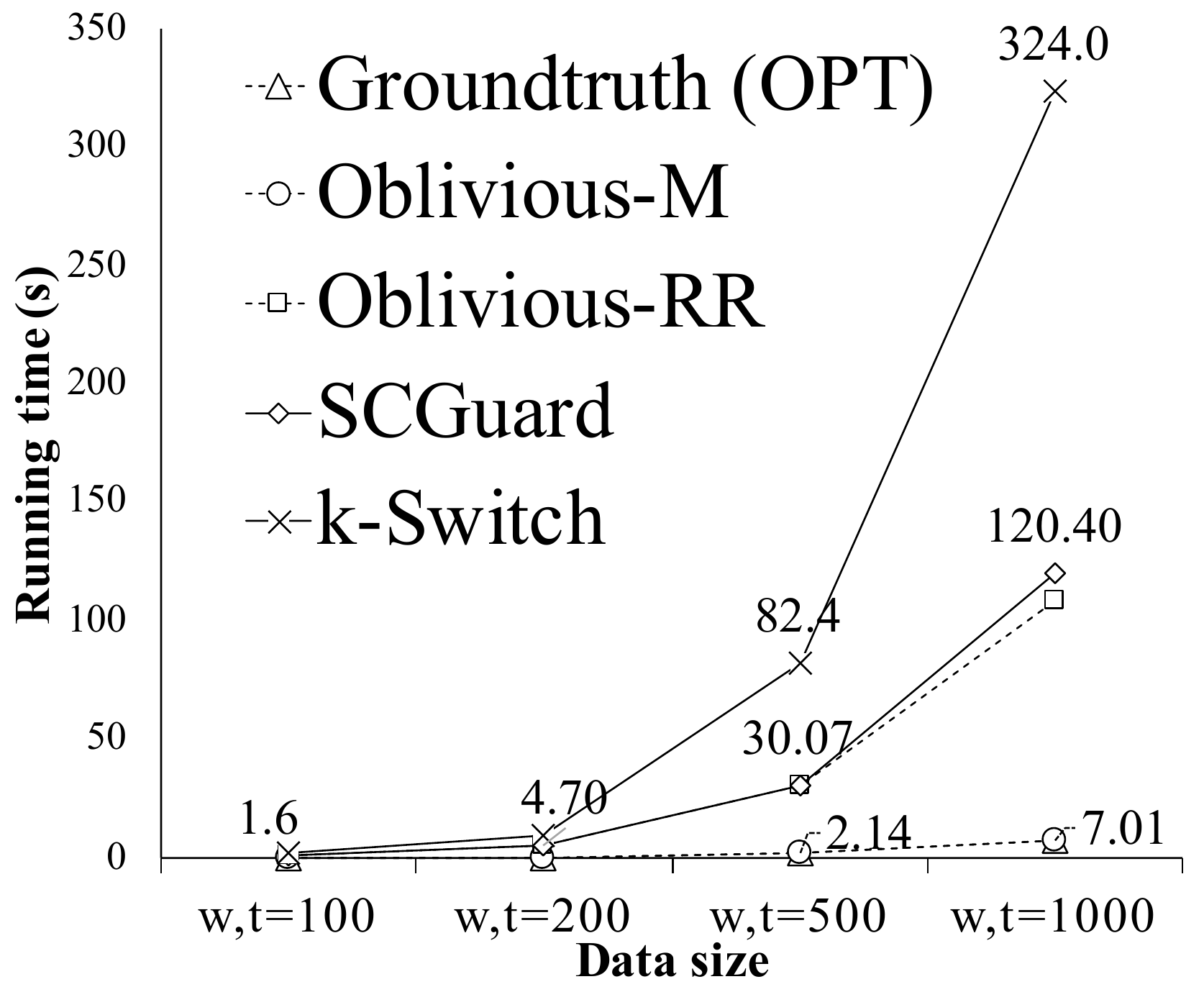}}
		\label{subfig:exp_3_time_real}}
	\subfigure[][{\scriptsize  Synthetic dataset}]{
		\scalebox{0.23}[0.23]{\includegraphics{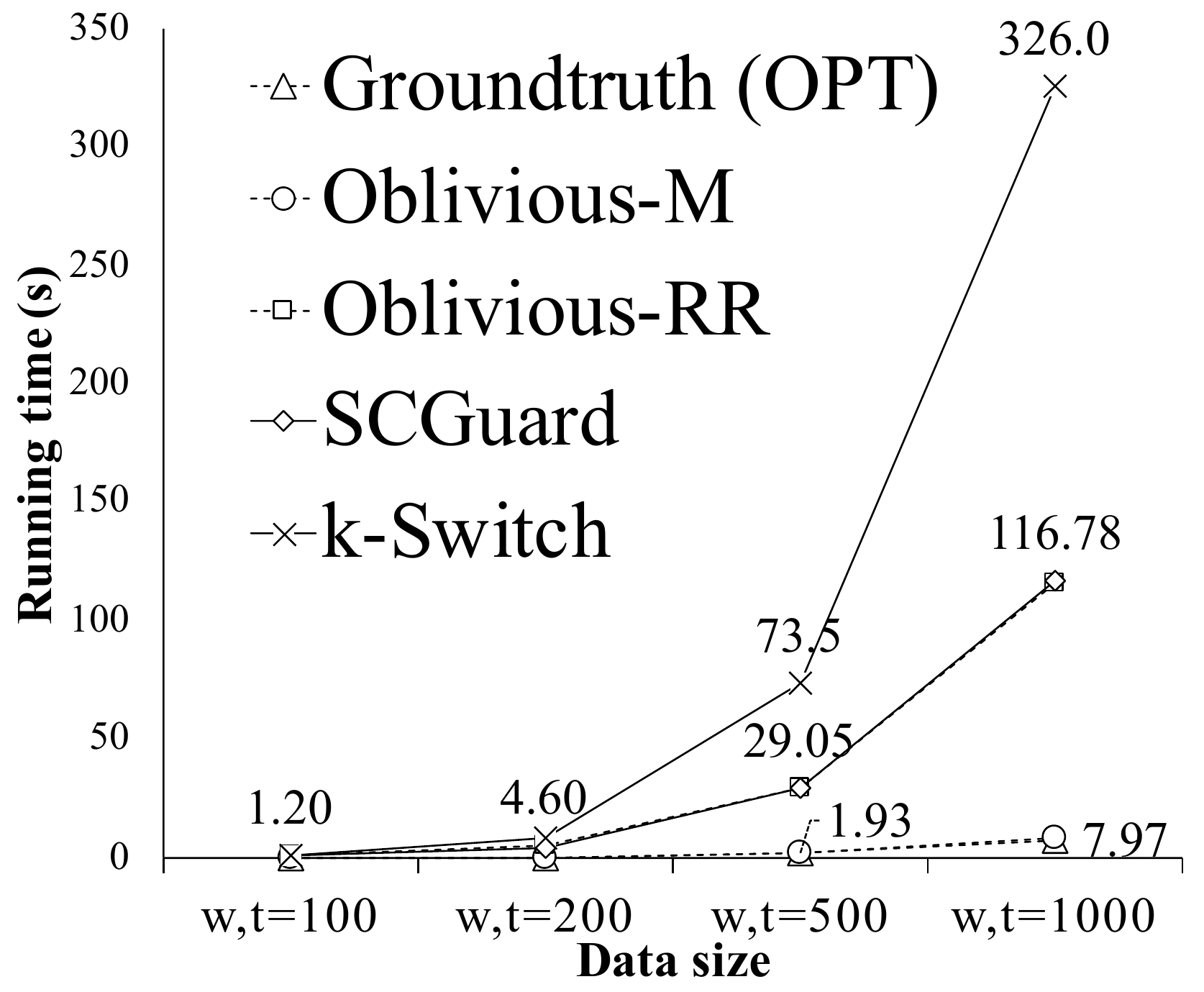}}
		\label{subfig:exp_3_time_syn}}\figureCaptionMargin
    \caption{\small Running time (seconds) on different input sizes.}
    \figureBelowMargin\vspace{0ex}
	\label{fig:exp_3_time}
\end{figure}

\fakeparagraph{Details of results} Fig.~\ref{fig:exp_2_time} shows the running time of different methods (in seconds), over different privacy parameters, on datasets of 100 workers vs. 100 tasks. $k$-Switch obtains around 2 seconds running time on the taxi dataset (Fig.~\ref{subfig:exp_3_time_real}), and strictly less than 2 seconds on the synthetic dataset (Fig.~\ref{subfig:exp_3_time_syn}). The running time is consistent across all privacy levels. When compared with the competing online method SCGuard, $k$-Switch is slightly slower, with a constant factor of 2. 

Varying data size: We look at the asymptotic growth of the running time across different sizes of datasets in Fig.~\ref{fig:exp_3_time}. First, the results verify the quadratic time complexity ($\mathcal{O}(\lambda k^{3} + \lambda n^2\log n)$, see Sec.~\ref{subsec:lambdaOpting}) in terms of $n$, where $n = \max(|W|, |T|)$, the input size. When the input size doubles, from $w=500$ to $w=1000$, the running time increases about 4 times, from 82.4 seconds to 324 seconds on the taxi dataset (Fig.~\ref{subfig:exp_3_time_real}). When compared with SCGuard, SCGuard has the same quadratic time complexity, so it also increases about 4 times, from 30.02 to 120.4 seconds when $w=500$ increases to $w=1000$. $k$-Switch is about a 2 times constant factor slower than SCGuard. The same trend is observed on the synthetic dataset (Fig.~\ref{subfig:exp_3_time_syn}). 

Varying parameter $k$ and $\epsilon$: 
Fig.~\ref{fig:exp_4_time} shows the running time of $k$-Switch, over different $k$, on datasets of different sizes (Fig.~\ref{subfig:exp_4_time_size}) and for different privacy parameters (Fig.~\ref{subfig:exp_4_time_eps}). As expected, when the data size is fixed, the running time of $k$-Switch is larger when $k$ is increased, while the increase is not significant. When we vary the privacy parameter $\epsilon$, the running time of $k$-Switch is stable, across different values of $k$ (Fig.~\ref{subfig:exp_4_time_eps}).

\begin{figure}[t!]\centering \figureTopMargin\vspace{0ex}
	\subfigure[][{\scriptsize On different input sizes}]{
		\scalebox{0.23}[0.23]{\includegraphics{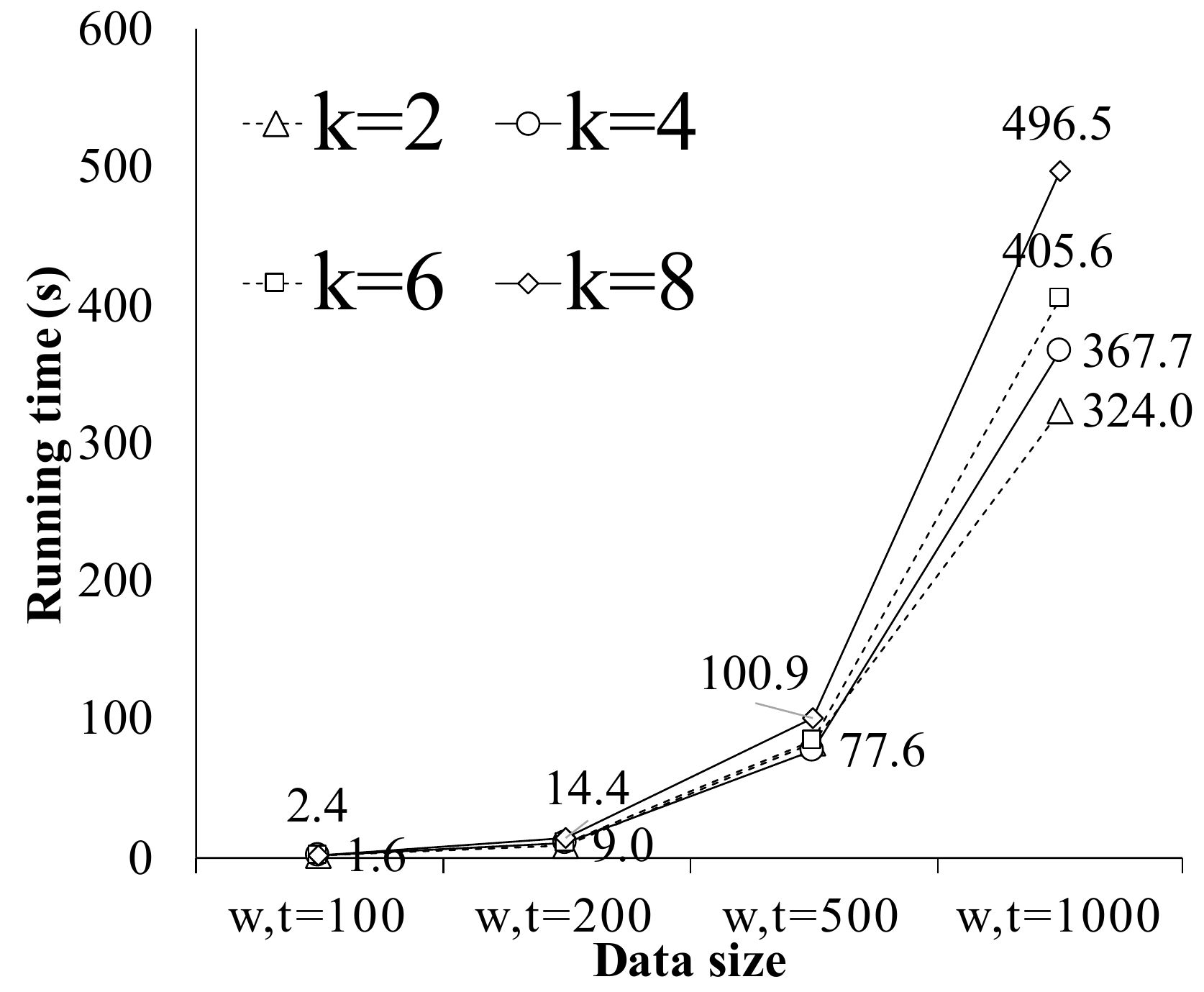}}
		\label{subfig:exp_4_time_size}}
	\subfigure[][{\scriptsize  For different privacy parameters}]{
		\scalebox{0.23}[0.23]{\includegraphics{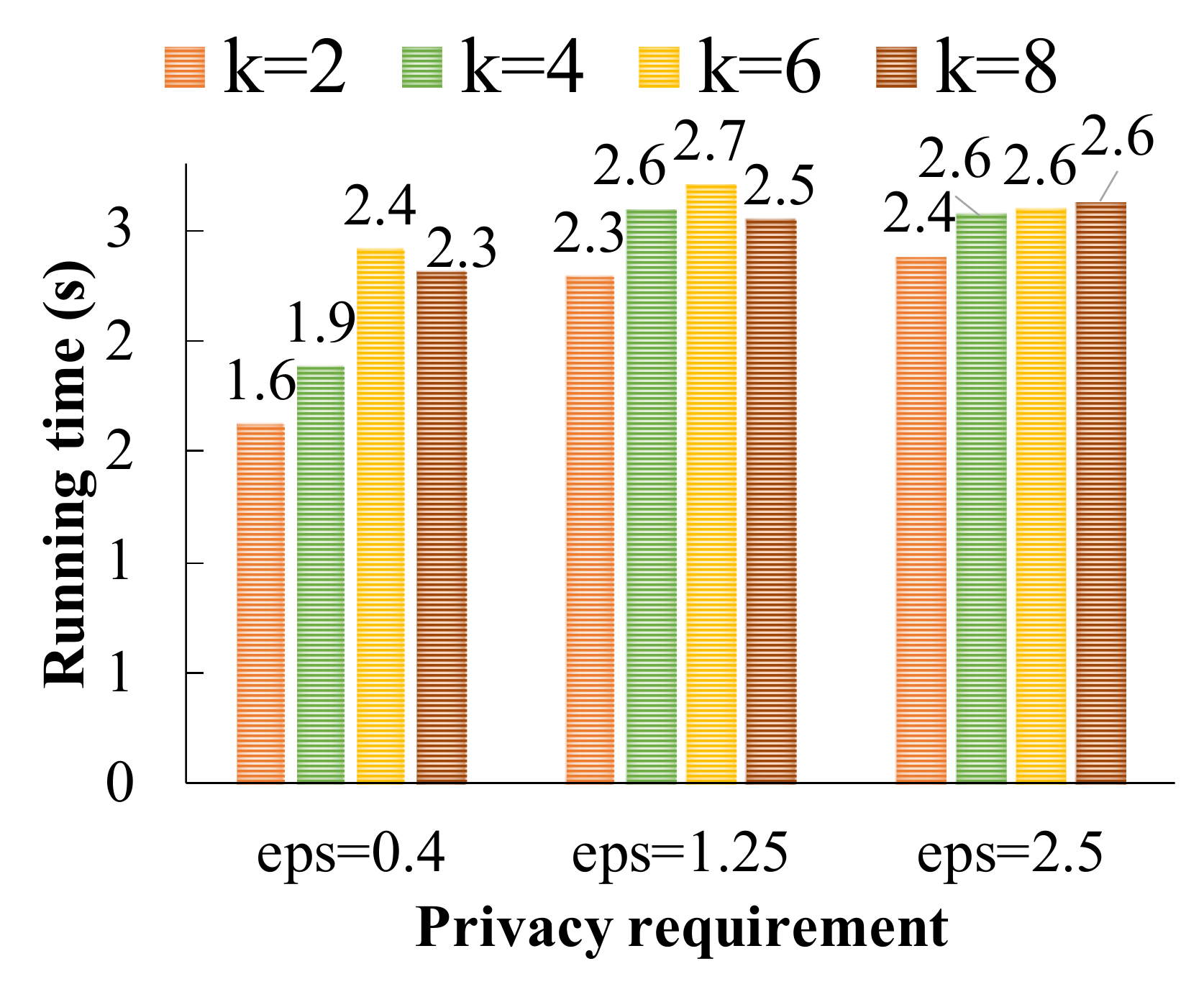}}
		\label{subfig:exp_4_time_eps}}\figureCaptionMargin
    \caption{\small Running time (seconds) of $k$-Switch for different $k$.}
    \figureBelowMargin\vspace{0ex}
	\label{fig:exp_4_time}
\end{figure}

\section{Related Work}
\label{sec:relatedWork}

For privacy-preserving task assignment in spatial crowdsourcing, we have discussed the most directly related online methods \cite{to2018,DBLP:conf/icde/TaoTZSC020} in Sec.~\ref{sec:intro}. Here, we include more related works in a broader context of privacy-preserving spatial crowdsourcing.

\fakeparagraph{Different protection methods} Encryption-based techniques have been used to compute the exact assignment between workers and tasks \cite{DBLP:conf/edbt/LiuCZZZQ17}. The computational cost of such pure encryption-based techniques is high and prohibitive for real-world applications. Other privacy protection technique, such as cloaking, is used to protect locations of workers \cite{DBLP:conf/mdm/PournajafXSG14}, but cloaking is considered as a weaker privacy-preserving technique than Geo-I (Sec.~\ref{subsec:geo-i}), because its assumption on adversaries' prior knowledge.  There are other related works using Geo-I as the privacy standard \cite{DBLP:journals/pvldb/ToGS14, DBLP:journals/tmc/ToGFS17}. However we adopt a stricter privacy model that the server is untrusted from all workers and tasks, and locations of both parties need to be perturbed before released to any other parties. Similar to our work, the batch-based (offline) setting has also been considered in \cite{DBLP:conf/www/WangYHWZM17}. However only the workers are protected. 

\fakeparagraph{Other crowdsourcing setting} There are other related works in the spatial crowdsourcing spectrum \cite{DBLP:journals/tmc/WangPCSWWCQ19, DBLP:journals/www/ZhaiSLLLZZ19, DBLP:journals/tifs/WangZYLHM20}. Different from the task assignment problem, data publishing has been considered in \cite{DBLP:journals/tmc/WangPCSWWCQ19}. The truthful rather than privacy-preserving task assignment is considered in \cite{DBLP:journals/www/ZhaiSLLLZZ19}. Privacy-preserving crowd-sensing is considered in \cite{DBLP:journals/tifs/WangZYLHM20}, and the focus is to protect the locations of workers when they report their sensing results, rather than considering our task assignment setting, where workers need to move to a specified location of the assigned task, and both locations (tasks and workers) are perturbed with differential privacy.  
\vspace{-1ex}
\section{Conclusion}
\label{sec:conclusion}

In this work, we target the Privacy-preserving Batch-based Task Assignment (PBTA) problem, where both workers and tasks use Geo-I to perturb their locations before sending them to the untrusted the server. We propose a novel solution $k$-Switch, which divides the workers into small groups, and uses a secure computation protocol $k$-HE for inner group communication. If workers inside the groups find that switching tasks between them improves the number of successfully assigned tasks, they swap tasks. Extensive experiments demonstrate that $k$-Switch is both effective and efficient, achieving significant utility gains with reasonable system overhead. 
\vspace{-1ex}

\begin{acks}
    Libin Zheng's work is supported by the National Natural Science Foundation of China No. 62102463 and the Basic and Applied basic Research Project of Guangzhou basic Research Program 202102080401. Peng Cheng's work is sponsored by the National Natural Science Foundation of China No. 62102149 and Shanghai Pujiang Program 19PJ1403300. Maocheng, Jiachuan and Lei Chen's work is partially supported by National Key Research and Development Program of China Grant No. 2018AAA0101100, the Hong Kong RGC GRF Project 16209519, CRF Project C6030-18G, C1031-18G, C5026-18G, AOE Project AoE/E-603/18, Theme-based project TRS T41-603/20R, China NSFC No. 61729201, Guangdong Basic and Applied Basic Research Foundation 2019B151530001, Hong Kong ITC ITF grants ITS/044/18FX and ITS/470/18FX, Microsoft Research Asia Collaborative Research Grant, HKUST-NAVER/LINE AI Lab, Didi-HKUST joint research lab, and HKUST-Webank joint research lab grants. Xuemin Lin's work is supported by ARC DP200101338. Corresponding author: Peng Cheng.
\end{acks}

\newpage

\bibliographystyle{ACM-Reference-Format}
\balance
\bibliography{add}

\newpage
\appendix
\section{Technical details}

\subsection{Oblivious-RR}
\label{subsec:appendix_orr}

First, for the fractional reachability graph, the weight for an edge $w_i, t_j$ is given by $\Pr(d(w_i, t_j) \leq R_w | d(w'_i, t'_j))$, i.e., the likelihood that the distance between worker $w_i$ and task $t_j$ is indeed smaller than the range $R_w$ using true locations, given the observed distance between their perturbed locations. We directly adopt analytical approach proposed in \cite{to2018} (see Sec. IV-B of \cite{to2018} for details), which uses binomial distribution to approximate the planer Laplace distribution in Geo-I. The calculation is approximate not exact, however it suffices as we only use it to measure the relative high/low likelihood of reachability. We then run the max-flow algorithms to obtain the maximum flow. Note the difference of this step with Figure \ref{fig:example_3}, which obtains the maximum \textit{cardinality} because of the integral 0 or 1 capacity of edges, here we obtain the flow with the maximum-weight flow. The obtained flow is shown in Fig.~\ref{subfig:orr_2_left}.

The last step is to randomly round the fractional flow to obtain a matching. For each worker $w_i$, we randomly select a task $t_j$ with probability proportional to the units of flow sent on the edge $(w_i, t_j)$. Using Fig.~\ref{subfig:orr_2_left} as an example, for $w_1$, we set the probability of selecting $t_1$ as $0.1 / (0.1 + 0.4 + 0.48) = 0.1 / 0.98 \approx 0.102$, selecting $t_2$ as $0.4 / 0.98 \approx 0.408 $ and selecting $t_3$ as $0.48 / 0.98 \approx 0.49$. $t_3$ has the highest probability of being selected. 

In Fig.~\ref{subfig:orr_2_right}, we show a particular matching obtained after the randomized rounding. If we use the true locations shown in Fig.~\ref{fig:example_4} to check the reachability constraint, it turns out that this matching has 3 valid assigned tasks, which equals to the optimal matching.

The details of Oblivious-RR is shown in Algorithm~\ref{algo:o_rr}. Similar to Oblivious-M, it constructs the flow network by adding the superficial source/sink nodes. Different from Oblivious-M, the edge weight is calculated based on the reachability likelihood at Line 10 (see Sec. IV-B of \cite{to2018} for how to calculate the likelihood). Line 15-20 execute the randomized rounding, selecting a task for a worker randomly proportional to the units of flow sent on the edge connecting them. The time complexity of Oblivious-RR is the same with Oblivious-M, as it only adds a post randomized rounding which takes $\mathcal{O}(n^2)$ time, with $n = \max(|W|, |T|)$. Overall, Oblivious-RR runs in $\mathcal{O}(n^3)$ time.

\begin{algorithm}[t]
	\DontPrintSemicolon
	\KwIn{Worker set $W$ and task set $T$. Perturbed locations $l_{w'}$ and $l_{t'}$ for each $w\in W$ and $t \in T$. }
	\KwOut{An assignment (matching) $M$ between workers and tasks.}

	$g:=$ an empty flow network \;
	$s:=$ source node; $d:=$ target (sink) node \;
	$g$.addNode($s$); $g$.addNode($d$) \; 
	\ForEach{$w \in W$}{
		$e := (s, w, capacity=1.0)$; $g$.addEdge($e$)\;
		
	}

	\ForEach{$t \in T$}{
		$e := (t, d, capacity=1.0)$; $g$.addEdge($e$)	\;
	}

	\ForEach{$w \in W$}{
		\ForEach{$t \in T$}{
			$prob := $ ComputeProb($l_{w'}, l_{t'}, R_w$) \;
			$e := (s, w, capacity=prob)$ \;
			$g$.addEdge($e$) \;
			
		}
	}

	$f := $ Ford-Fulkerson($g$) \;
	$M:= \{\}$ \;
	\ForEach{$w \in W$}{
		$wt_{sum}:= 0$ \;
		\ForEach{$t \in T$}{
			$wt[t] := f(w, t)$ \;
			$wt_{sum} += wt[t]$ \;
		}
		$t_{selected} :=$ sample $t \in T$ with a prob. $wt[t] / wt_{sum}$ \;
		$M$.insert($w, t_{selected}$) \;
	}

	\Return{$M$}\;
	\caption{\texttt{Oblivious-RR} }
	\label{algo:o_rr}
\end{algorithm}

\subsection{k-Switch}
\label{subsec:appendix_kswitch}

\noindent \textbf{Extended proof of Theorem~\ref{theo:kpgk2}}: 
\begin{proof}
    (Continued.) Recall that a matching on a graph is a set of edges without common vertices. The maximum weight matching is a matching with maximum sum of the weight on the edges of the matching. 

	We show two directions. First, if we have a $2$-division $D = \{g_1, \ldots, g_d\}$ of the KGP, it could be mapped to a matching $m$ on $M_\mathcal{G}$. The matching $m$ is obtained by only including edges within each $2$-group $g_i \in D$, i.e., $m = \{e=(p_i, p_j) | \exists q \to p_i, p_j \in g_q\}$. We show the edge set $m$ is indeed a matching. According to Def.~\ref{def:k-division}, $2$-groups are non-overlapping. This ensure that no edges in $m$ shares common vertices, as they only include edges within each $2$-groups, and $2$-groups are non-overlapping. So edges $m$ do not share vertices, and thus form a matching. 

	Then, we show the other direction, if we have a matching $m = \{e=(p_i, p_j)\}$ on $M_\mathcal{G}$, it could be mapped to a $k$-division $D$ of KGP. We create the $k$-division as follows: for each edge $e=(p_i, p_j)$, we create a $2$-group $g=(p_i, p_j)$ and insert it to $D$. As $m$ is matching, so edges don't share common vertices, all created $2$-groups don't share any matched-pairs, and the collection $D$ forms a $k$-division. 
\end{proof}

\noindent \textbf{Graph transformation}: 
\begin{figure}[t!]\centering
	\scalebox{0.35}[0.35]{\includegraphics{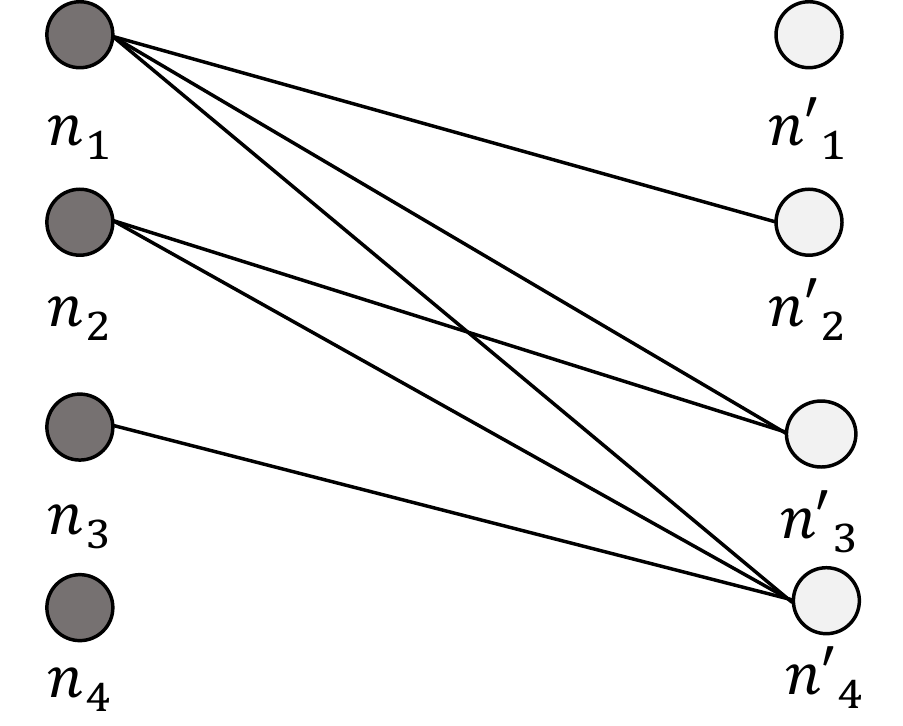}}\vspace{-1ex}
	\caption{\small An illustrative example of the graph transformation in Alg.~\ref{algo:graphTransform}.}
	\figureBelowMargin
	\label{fig:graphTransformation}
\end{figure}

\begin{algorithm}[t]
	\DontPrintSemicolon
	\KwIn{Complete graph $G=(V, E)$ }
	\KwOut{Bipartite graph $G'$}
	Initialize an empty graph $G'$ \\
	\ForEach{node $n_i \in V$}{
		Insert $n_i$ to $G'$ \\
		Create a new node $n'_{i}$, insert it to $G'$ \\
	}
	\ForEach{node $n_i \in G$}{
		\ForEach{node $n_j \in G$}{
			$e:=$ edge between $n_i$ and $n_j$ \\
			$w:= w(e)$ \\
			Add edge $e'=(n_i, n'_{j}, weight=w)$ to $G'$\\ 
		}
	}
	\Return{$G'$}
	\caption{GraphTransform}
	\label{algo:graphTransform}
\end{algorithm}

Alg.~\ref{algo:graphTransform} transforms a complete graph to a bipartite graph. See Figure \ref{fig:graphTransformation} for an example of a graph containing 4 nodes. The basic idea is to copy $v$ nodes to $2v$ nodes, where the 1st node is copied to its copied node (denoted by $n'_i$). The original and the copied nodes are put in sets $\mathcal{L}$ and $\mathcal{R}$, respectively. Then we connect every node in set $\mathcal{L}$ with every other node in set $\mathcal{R}$, except for the node copy of itself. Also, we don't add redundant edges, meaning we only add edges between $n_2$ to node $n'_3$ and $n'_4$, but not to $n'_1$, because edge $(n_2, n'_1)$ is the same as $(n_1, n'_2)$. \\

\noindent \textbf{Extended proof of Theorem~\ref{theo:kpgk3}}: 
\begin{proof}
    We show that there is a \textit{yes} instance for the PBGP iff. there is a \textit{yes} instance for KGP. We first define the decision version of KGP: given a positive integer $J$, does there exist a $k$-division $D = \{g_1, \ldots, g_d\}$, such that $\text{score}(D) \leq J$?
	
	For KGP, we construct a graph $G' = (V', E')$ similar to the proof of Theorem~\ref{theo:kpgk2} when $k=2$. The vertex set $V'$ corresponds to each matched worker-task pair $p$ from the baseline matching $M_0$. Between every vertex $p_i$ and $p_j$ in $G'$, we add an edge $e=(p_i, p_j)$. This edge corresponds to a $2$-group, with edge weight $w(e)=\text{OScore}((p_i, p_j))$. 
	
	If-direction: if there is a \textit{yes} instance to the KGP, we could find a \textit{yes} instance to the PBGP. Note that for the rest of the proof, we assume the input for the KGP, the baseline matching $M_0$ has a size $|M_0|$ being a multiple of $k$, for simplicity. The hardness result holds w.l.o.g. because the following proof holds on the special case of KGP, so the general version is even harder. Say for a given integer $J$, we could obtain a $k$-division $D = \{g_1, \ldots, g_d\}$, such that $\text{score}(D) \leq J$. Such $k$-division corresponds to a $p$-partition. For each $k$-group, it contains $k$ nodes. We have $p=\frac{|V'|}{k}$ $k$-groups in total. The score of the $k$-division is $\text{score}(D) \leq J$, and we look closely at what edges are included. Since $\text{score}(D) = \sum_{i} \text{OScore}(g_i)$, it sums up all OScore of all the $k$-groups. For each $k$-group, $\text{OScore}(g_i) = \sum_{1 \leq s,t \leq k}\text{OScore}(\{p_{s}, p_{t}\})$, summing up the edges weight between the nodes within a particular $k$-group. This $k$-division corresponds to a $p$-partition, and we let $W = J$. The weight sum  corresponds to all the edges which are within each partition, which is the  $k$-groups. And because $\text{score}(D) \leq J$, we know the the sum of weights within each partition in the $p$-partition is smaller or equal than $W$. 
	
	Only-if direction: if there is a \textit{yes} instance to the PBGP problem, then we could obtain a \textit{yes} instance for the KGP. Given the graph and the $p$-partition, we know we have $V=V_1 \cup \cdots \cup V_p$, $|V_1| = \cdots |V_p| = \frac{|V|}{p}$. And for a given positive integer $W$, for the edges $E' \subset E$ that have two endpoints in the two different sets $V_i$, $E' = \{(v_i, v_j)\in E | v_i \in V_i, v_j \in V_j , i \neq j \}$, the sum of the weights on all such edges is smaller or equal to the given integer $W$, $\sum_{e \in E'}w(e) \leq W$. First we transform the graph to be a complete graph by adding edges between nodes that don't edges, and setting the edge weight to be 0. Since we know we have a \textit{yes} instance, adding such 0 weight edges to the graph would not increase the cut edges (edges cross different partitions), and the transformed instance would still be a \textit{yes} instance. We then transform each $V_i$ to a corresponding $k$-group $g_i$, where we set $k=\frac{|V|}{p}$. Each $g_i$ contains all the nodes in $V_i$. Since we've transformed the graph to be complete graph, each pair of nodes in $g_i$ has edges between them, and for any edge $e = (n_i, n_j)$ we define $\text{OScore}(\{n_i, n_j\}) = w(e)$. So obviously for this constructed $k$-size $g_i$, we have the $\text{OScore}$ defined, by summing up the weight of all the edges within the group. Finally, now we have a $k$-division $D = \{g_i\}$, where each $g_i$ is transformed from the partition $V_i$, and has size $k$. We define the score for $G$ to be $\text{score}(D) = \sum_{i} \text{OScore}(g_i)$. Because we know all the cross-partition edges sum, $W'$ is less or equal to $W$, so all the in-partition edges sum, which is $\text{score}(D) =  W' \leq W$. Set $J = W$, we've obtained a \textit{yes} instance to KGP, $\text{score}(D) \leq J$. 
\end{proof}

\noindent \textbf{Details of Greedy-Grouping}: 

\begin{algorithm}[t]
	\DontPrintSemicolon
	\KwIn{ A baseline matching $M_0$. Perturbed locations $l_{w'}$ and $l_{t'}$ for each $w\in W$ and $t \in T$. }
	\KwOut{A $k$-division $D$.}

	Initialize a heap $h$ \;
	\ForEach{$p_1 \in M_0$}{
		$used[p_1]=$ False \;
		\ForEach{$p_2 \in M_0$}{
			\If{$p_1.w \neq p_2.w$}{
				oscore$:= d(l'_{p_1.w}, l'_{p_2.w}) + d(l'_{p_1.t}, l'_{p_2.t})$ \;
				$h$.insert($\{p_1, p_2\}$, oscore) \tcp*{Push $2$-group $\{p_1, p_2\}$ with its OScore into the heap}
			}
		}
	}

	$d := \lceil |M_0| / k \rceil$ \tcp*{Calculate how many $k$-groups to create} 
	$D :=  \{\}$ \;
	\For{$i:=0$; $i < d$; $i=i+1$}{
		$g:=\{\}$ \;
		\While{$g.size() < k$}{
			\If{$size == k-1$}{
				\While{True}{
					$p:=$ random($|M_0|$) \;
					\If{$used[p.w]$ is False}{
						break \;
					}
				}
				$used[p]=$ True\;
				$g$.Insert($p$) \;
				break \;
			}
			\While{True}{
				$p_1, p_2:=h$.pop() \;
				\If{$used[p_1.w]$ is False and $used[p_2.w]$ is False }{
					break \;
				}	
			}
			$used[p_1]=used[p_2]=$ True\;
			$g$.Insert($p_1$) \;
			$g$.Insert($p_2$) \;
		}
		$D$.Insert($g$) \;
	}
	\Return{$D$}\;
	\caption{\texttt{Greedy-Grouping} }
	\label{algo:greedy_grouping_full}
\end{algorithm}

Alg.~\ref{algo:greedy_grouping_full} is the full detailed algorithm. At Line~\ref{algo:line:heap}, the algorithm uses a heap storing all combinations of pairs in the baseline matching $\{p_i, p_j\}$ with its OScore as the sorted key. In this way, every time we could pop the pair $\{p_i, p_j\}$ with the smallest OScore, and if both of them have not been inserted to any $k$-group, we add them to the current $k$-group. After each $k$-group is formed, we continue to the next one until a $k$-division is obtained and returned. Line 10-28 greedily add two matched pairs to the current $k$-group $g$, until its size reaches $k$. In total, we form $d$ groups, as calculated at Line 8. The time complexity of Alg.~ \ref{algo:greedy_grouping} is dominated by the heap construction, which takes $\mathcal{O}(e\log e)$, where $e$ is the total number of elements inserted to the heap. We know $e$ is all the combinations of matched pairs, as shown at Line 2 and 4, so $e=n^2$, where $n=\max(|W|,|T|)$, the size of the baseline matching. In conclusion, for Alg.~\ref{algo:greedy_grouping_full} has an $\mathcal{O}(n^2\log n^2) \to \mathcal{O}(n^2\log n)$ time complexity. 

\subsection{Details of k-HE protocol}
\label{subsec:appendix_kHE}

The detailed $k$-HE protocol is shown in Alg.~\ref{algo:kHE_full}. 

\begin{algorithm}[t]
	\DontPrintSemicolon
	\KwIn{A $k$-group $g = \{p_{1}, \cdots , p_{k} \}$ }
	\KwOut{The set of modified worker-task matching $M_{g}$ for $g$}
	Let $w_1, \cdots, w_k$ be the workers inside $k$-group $g$\\
	Let $t_1, \cdots, t_k$ be the tasks inside $g$\\
	Randomly elect two nodes (task or worker) as proxy servers $P_a$ and $P_b$ \\
	$P_b$ generates its public key $pk$ and private key $sk$ \\
	$P_b$ sends its public key to $P_a$ and all workers and tasks \\
	\ForEach{$w$}{
		Worker $w$ sends their encrypted  location $(E_{pk}(l_{w}.x), E_{pk}(l_{w}.y))$ to proxy $P_a$ \\
	}
	\ForEach{$t$}{
		Task $t$ sends their encrypted location $(E_{pk}(l_{t}.x), E_{pk}(l_{t}.y))$ to proxy $P_a$\\
	}
	\ForEach{$w$}{
		\ForEach{$t$}{
			$P_a$ and $P_b$ use \texttt{SecureDistanceCalculation} (Algorithm \ref{algo:secureDistanceCalc}) to calculate $E_{pk}(m)=E_{pk}(|l_{w}- l_{t}|^{2})$. $m$ is sent to $P_b$ \\
			$P_b$ decrypts message using its private key and store the true distance $m'=D_{sk}(m) = |l_{w}- l_{t}|^{2}$ \\
		}
	}
	$P_b$ runs an exact matching algorithm to obtain a new matching $M_{g}$ \;
	
	\Return{$M_{g}$}\;
	\caption{$k$-HE protocol}
	\label{algo:kHE_full}
\end{algorithm}

Step 1. two random nodes in the $k$-group are elected to serve as the proxy servers $P_a$ and $P_b$. Note that the nodes could be either a worker or a task. The public keys $pk$ are generated for $P_b$, and sent to all parties. 

Step 2. all the tasks and workers send the encrypted true location to $P_a$. The true location for a participant $v$ ($v$ is a worker $w$ or a task $t$) is represented by a 2-dimensional tuple $(l_{v}.x, l_{v}.y)$, and the encrypted coordinates are $(E_{pk}(l_{v}.x), E_{pk}(l_{v}.y))$. 

Step 3. the two proxy servers $P_a$ and $P_b$ utilize an interactive communication scheme to compute the true distance between the workers and tasks within the group. The interactive communication scheme is shown in Alg.~\ref{algo:secureDistanceCalc}. Most of the steps follow Algorithm 1 in HESI \cite{DBLP:conf/edbt/LiuCZZZQ17}. The true distance is obtained and stored at $P_b$. The purpose of having a two-server structure is to compute the multiplication of two secret numbers, which needs the plaintext of one of the operands. All the communication between the two proxy servers are encrypted using the public key $pk$. In the end, $p_b$ uses its private key to decrypt the true distance between workers/tasks in the group.

\begin{algorithm}[t]
	\DontPrintSemicolon
	\KwIn{Encrypted location of a task $E_pk(l_{t})$, true location of a worker $l_{w}$}
	\KwOut{The true distance $|l_w-l_t|$}
	$P_a$ computes $E_{pk}(l_w.x-l_t.x)=E_{pk}(l_w.x)*E_{pk}(l_t.x)^{N-1}$ \\
	$P_a$ computes $E_{pk}(l_w.y-l_t.y)=E_{pk}(l_w.y)*E_{pk}(l_t.y)^{N-1}$ \\
	$P_a$ and $P_b$ uses the \textit{SecMul} protocol \protect\cite{DBLP:conf/edbt/LiuCZZZQ17} to calculate the square, $sq_x=E_{pk}((l_w.x-l_t.x)^2)$, and $sq_y=E_{pk}((l_w.y-l_t.y)^2)$ \\
	$P_a$ calculates $sq_d=E_{pk}((l_w.x-l_t.x)^2) * E_{pk}((l_w.y-l_t.y)^2) = sq_x*sq_y = E_{pk}((l_w.x-l_t.x)^2 + (l_w.y-l_t.y)^2)$ \\
	\Return{$sq_d$}
	\caption{\texttt{SecureDistanceCalculation}}
	\label{algo:secureDistanceCalc}
\end{algorithm}

Step 4. After obtaining the true distance between workers and tasks, $P_b$ obtains an exact matching w.r.t. to the reachability graph constructed from the true distance. It returns the matching result to the server. Note that only the updated matching result is returned, the actual distance of the tasks are stored in $P_b$ only, and the true locations are always strictly protected and encrypted.

\section{Experiments}
\noindent\textbf{Real-world dataset}. For the real-world dataset, we use the taxi dataset from  The dataset contains detailed trajectory samples from the taxi cars running in Xi'an city, Shanxi province of China. On each day, the dataset contains around 30 million samples, with each data sample recording the GIS coordinates (latitude and longitude) of the taxi, the taxi ID, the passenger ID, and the timestamp. 

The sampled worker and task locations are in GIS coordinates (latitude and longitude), and we convert it to Cartesian coordinates in meters (in X and Y) using a common technique -- Equi-rectangular projection \cite{snyder1997flattening}, and shift the lower-left boundary point to (0,0). Our points are within the range $[0, 8004] \times [0,8259]$. 

\subsection{Additional results}
\label{subsec:appendix_experiments}

\begin{figure}[h!]\centering \figureTopMargin\vspace{0ex}
	\subfigure[][{\scriptsize Utility}]{
		\scalebox{0.20}[0.20]{\includegraphics{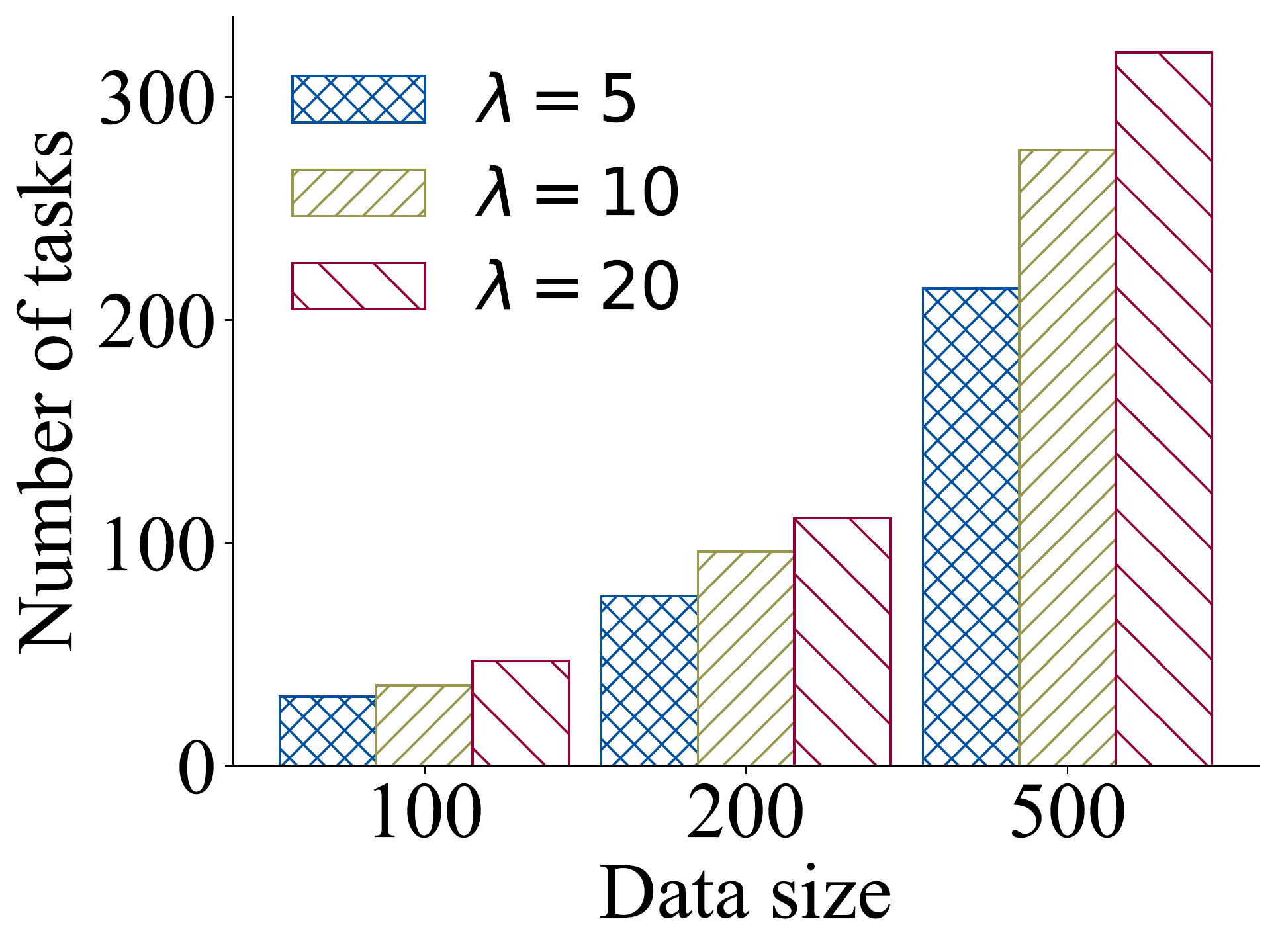}}
		\label{subfig:utility_lambda}}
	\subfigure[][{\scriptsize  Running time}]{
		\scalebox{0.20}[0.20]{\includegraphics{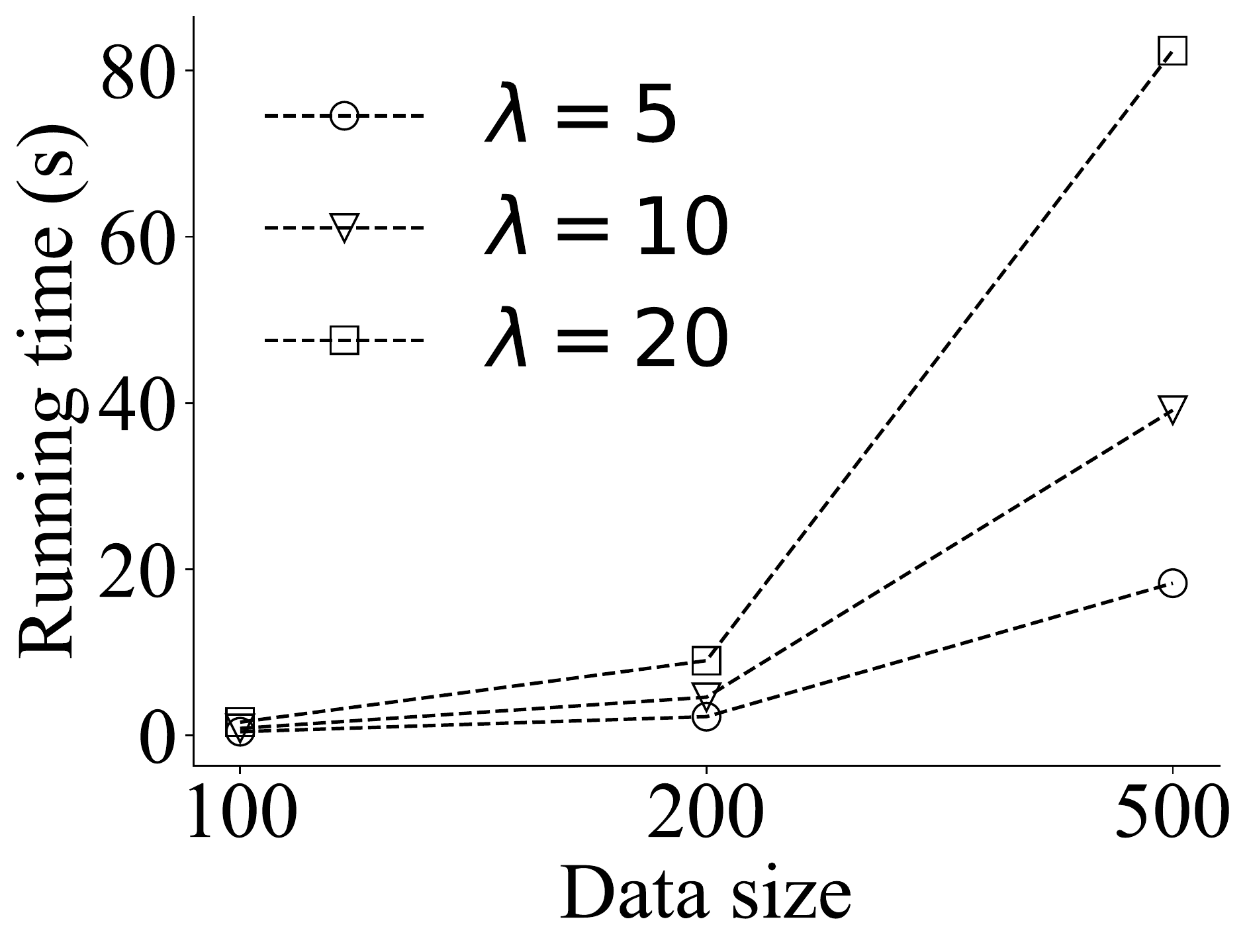}}
		\label{subfig:time_lambda}}\figureCaptionMargin
    \caption{\small Effect of parameter $\lambda$ for $k$-Switch method on the number of tasks assigned vs. the running time.}
    \figureBelowMargin\vspace{0ex}
	\label{fig:effect_lambda}
\end{figure}

\fakeparagraph{Varying parameter $\lambda$} 
Parameter $\lambda$ controls how many iterations $k$-Switch runs. For each iteration, $k$-Switch improves the number of assigned tasks by running $k$-HE protocol in small groups of size $k$. Thus $\lambda$ controls a tradeoff between the utility of our method vs. the running time. This is verified by the result shown in \figref{fig:effect_lambda}. When we reduce $\lambda$ from the system default ($\lambda=20$) to smaller values (\eg $\lambda=5, 10$), a moderate decrease of number of assigned tasks is observed, as shown in \figref{subfig:utility_lambda}. It is observed consistently across different input sizes. In the meantime, the running time of $k$-Switch is considerably improved, as shown in \figref{subfig:time_lambda}. The running time on data size 500 ($w, t=500$) is decreased from 82.4 seconds to 18.3 seconds when $\lambda$ is decreased from the system default ($\lambda=20$) to 5. 

\subsection{Discussions}

As the experiments show, $k$-Switch delivers strong performance in terms of significantly increasing the number of successfully assigned tasks of the baseline matching obtained by various oblivious baselines. Verifying the motivation of the research, $k$-Switch also outperforms the existing online method SCGuard by large margins. Despite more significant utility gain when we increase the value of $k$, we argue that setting $k=2$ is a cost-effective choice, as it balances the tradeoff between system overhead and utility gain. 

In addition, the secure computation we utilize within the small group is to prevent any location leak inside the group. Our experimental results are promising because a small value of $k$ (such as $k=2$) is sufficient for $k$-Switch to obtain significant utility gain. For real-world adoptions, even if there is malicious attack within the group such that the true locations are disclosed, the potential damage is manageable, given the fact that $k$ is small.

\end{document}